\theoremstyle{definition}
\newtheorem{definition}{Definition}
\newtheorem*{definition*}{Definition}
\theoremstyle{plain}
\newtheorem{theorem}{Theorem}
\newtheorem{corollary}{Corollary}
\newtheorem{lemma}{Lemma}
\newtheorem{proposition}{Proposition}
\newtheorem*{theorem*}{Theorem}
\newtheorem*{corollary*}{Corollary}
\newtheorem*{lemma*}{Lemma}
\newtheorem*{proposition*}{Proposition}
\theoremstyle{remark}
\newtheorem{remark}{Remark}
\NewDocumentEnvironment{variant}{O{theorem} D(){} m}
{\addtocounter{#1}{-1}%
\expandafter\renewcommand\csname the#1\endcsname{\ref{#3}$'$}%
\begin{#1}[#2]}
{\end{#1}}
\NewDocumentEnvironment{appthm}{O{theorem} D(){} m}
{\addtocounter{#1}{-1}%
\expandafter\renewcommand\csname the#1\endcsname{\ref{#3}}%
\begin{#1}[#2]}
{\end{#1}}
\begin{document}

\title{Markov Gap and Bound Entanglement in Haar Random State}

\author{Tian-Ren Jin}
\affiliation{Beijing National Laboratory for Condensed Matter Physics, Institute of Physics, Chinese Academy of Sciences, Beijing 100190, China}
\affiliation{School of Physical Sciences, University of Chinese Academy of Sciences, Beijing 100049, China}

\author{Shang Liu}
\affiliation{Kavli Institute for Theoretical Physics, University of California, Santa Barbara, California 93106, USA}
\affiliation{Department of Physics, California Institute of Technology, Pasadena, California 91125, USA}
\affiliation{Beijing National Laboratory for Condensed Matter Physics, Institute of Physics, Chinese Academy of Sciences, Beijing 100190, China}

\author{Heng Fan}
\email{hfan@iphy.ac.cn}
\affiliation{Beijing National Laboratory for Condensed Matter Physics, Institute of Physics, Chinese Academy of Sciences, Beijing 100190, China}
\affiliation{School of Physical Sciences, University of Chinese Academy of Sciences, Beijing 100049, China}
\affiliation{Beijing Academy of Quantum Information Sciences, Beijing 100193, China}
\affiliation{Hefei National Laboratory, Hefei 230088, China}
\affiliation{Songshan Lake Materials Laboratory, Dongguan 523808, China}

\begin{abstract}
    Bound entanglement refers to entangled states that cannot be distilled into maximally entangled states and therefore cannot directly be used in many quantum information processing protocols. We identify a relationship between bound entanglement and the Markov gap, which is introduced within holography via the entanglement wedge cross section and is related to the fidelity of the partial Markov recovery problem. We prove that a bound entangled state must have a nonzero Markov gap. Conversely, for sufficiently large systems, a state with a weakly nonzero Markov gap typically has a bound entangled or separable marginal state, where entanglement is undistillable. Furthermore, this implies that the transition from a bound entangled to a separable state originates from the properties of states with a weakly nonzero Markov gap, which may be dual to non-perturbative effects from a holographic perspective. Our results shed light on the investigation of the Markov gap and enhance interdisciplinary applications of quantum information.
\end{abstract}
\maketitle

\section{Introduction}

    Quantum entanglement, an essential feature of quantum mechanics, has attracted considerable attention across diverse fields, ranging from condensed matter physics~\cite{PhysRevLett.96.110404,PhysRevLett.96.110405} to black hole physics~\cite{PhysRevLett.96.181602,Ryu_2006}.
    Moreover, serving as a valuable resource for quantum information processing tasks, such as quantum key distribution~\cite{PhysRevLett.67.661}, quantum dense coding~\cite{PhysRevLett.69.2881}, quantum teleportation~\cite{PhysRevLett.70.1895}, entanglement swapping~\cite{PhysRevLett.68.1251}, etc., entanglement is a central concept in quantum information theory~\cite{RevModPhys.81.865,RevModPhys.91.025001}.

    In quantum entanglement theory, several operational and mathematical criteria have been developed to detect or characterize entanglement, such as the positive partial transpose (PPT) criterion~\cite{PhysRevLett.77.1413}, and the reduction criterion~\cite{PhysRevA.59.4206}.
    PPT entangled states are bound entangled, which cannot be distilled into maximally entangled states~\cite{PhysRevLett.80.5239}.
    Thus, such states are not direct resources for quantum computation and communication tasks such as teleportation.
    Since any pure bipartite entangled state is distillable~\cite{PhysRevA.53.2046}, a bound entangled bipartite state must be a mixed state.
    Its purification exhibits tripartite entanglement.
    Furthermore, it has been shown that maximally entangled states are required to produce certain bound entangled states~\cite{PhysRevLett.86.5803}.
    This property of bound entanglement implies irreversibility of the quantum entanglement, which is a unique feature distinguishing it from other quantum resources~\cite{lami2023no}.
    Negative partial transpose (NPT) states could also be bound entangled.
    In contrast, the state that violates the reduction criterion is distillable~\cite{PhysRevA.59.4206}.

    In the context of entanglement distillation, the distillable entanglement $E_D$ and the entanglement cost $E_C$ are two key measures of entanglement~\cite{PhysRevA.54.3824,Hayden_2001,PhysRevLett.84.2014}.
    The distillable entanglement $E_D$ represents the asymptotic rate of distilling Bell pairs from a given state, and the entanglement cost $E_C$ quantifies the rate of consuming Bell pairs to prepare that state.
    Recently, a quantity called Markov gap has been proposed~\cite{hayden2021markov}, which has been suggested to detect the tripartite entanglement~\cite{akers2020entanglement,PhysRevLett.126.120501}.
    The Markov gap quantifies the difference between reflected entropy and mutual information, $h(A:B) = S_R(A:B) - I(A:B)$. 
    It is closely related to the entanglement of purification (EoP), which represents the entanglement cost $E_{LOq}$ required to prepare a state with negligible quantum communication~\cite{10.1063/1.1498001}.
    However, the direct connection between the Markov gap and entanglement distillation remains unclear.
    In this work, we explore the possible relationship between bound entanglement and the Markov gap by investigating Haar random states. 

    The reflected entropy is holographically dual to the minimal cross section of the entanglement wedge~\cite{dutta2021canonical}. 
    For the classical holographic state, it is conjectured that $S_R = 2 E_p$.
    However, they differ significantly for general quantum states.
    The EoP never increases under the discarding of quantum systems~\cite{PhysRevA.91.042323}, whereas a counterexample has been found for the reflected entropy $S_R$~\cite{PhysRevA.107.L050401}.
    The gap between (twice) the EoP and the mutual information will be termed the EoP gap in the following, i.e. $g(A:B) = 2E_p(A:B) - I(A:B)$.
    EoP gap $g$ vanishes for triangle states, in which tripartite entanglement is absent, while the Markov gap $h$ vanishes for the so-called sum of triangle state (SOTS)~\cite{PhysRevLett.126.120501}.

    The detailed introduction to these concepts is provided in Sec.~\ref{sec: preliminaries}.
    Our main results are summarized as follows.
    In Sec.~\ref{sec: triangle_state}, we show that a nonzero Markov gap is necessary condition for the bound entanglement, revealing a possible relation between bound entanglement and the Markov gap.
    Conversely, one may ask whether the nonzero Markov gap is also sufficient for bound entanglement, or at least whether the distillability of entanglement can be characterized in terms of the Markov gap. 
    Therefore, In Secs.~\ref{sec: tripartite_entanglement} and~\ref{sec: Markov_gap}, we study tripartite Haar random states by employing the typicality of conditional mutual information (CMI) and Markov gap, respectively.    
    In the thermodynamic limit, the nonzero Markov gap typically exhibits two regimes. 
    One regime is strongly nonzero, corresponding to a large magnitude. 
    The other is weakly nonzero, exponentially small but still nonvanishing
    States with a weakly nonzero Markov gap typically possess PPT marginal states, i.e., either separable or bound entangled, whereas states with a strongly nonzero Markov gap typically possess NPT marginal states.
    Our findings suggest the relationship between the Markov gap and the entanglement distillation.

    In Sec.~\ref{sec: holography}, we interpret our results from the perspective of holographic duality.
    We show that unless $g = 0$, states with a vanishing Markov gap violates the monogamy of mutual information, a condition required for the geometric contribution of holographic entanglement entropy~\cite{PhysRevD.87.046003}.
    This supports the conjecture $S_R = 2E_p$, and indicates that the tripartite entanglement with a vanishing Markov gap perturbatively contributes to bulk corrections.
    In contrast, a strongly nonzero Markov gap corresponds to geometric contributions, while a weakly nonzero Markov gap corresponds to non-perturbative effects, where the PPT bound entanglement typically emerges.
    We hope that these results will stimulate further investigation into the role of the Markov gap in understanding entanglement distillation.

\section{Preliminaries} \label{sec: preliminaries}

    In this section, we introduce and review the physical notions and definitions used in this work.
    For the probabilistic adverb used in this paper and for more details, see Appendix~\ref{app: additional}. 

    \subsection{Entanglement and entanglement measures}

        Entanglement is a quantum property that is invariant under local unitary operations and does not increase under local operations and classical channels (LOCC)~\cite{RevModPhys.81.865}.
        A bipartite pure state is entangled, if the entanglement entropy of $\ket{\psi}_{AB}$ does not vanish
        \begin{equation}
            S(A) \equiv S(\rho_A) \equiv \mathrm{Tr}(\rho_A\log \rho_A) = \sum_i p_i \log p_i > 0.
        \end{equation}
        Here, the base of the logarithm is $2$ throughout this work.
        Since any pure bipartite entangled state can be reversibly distilled into maximally entangled states~\cite{PhysRevA.53.2046} via LOCC, the entanglement entropy serves as the unique measure of pure bipartite entanglement, analogous to thermodynamic entropy.

        In this work, we mainly focus on the bipartite entanglement of the mixed state, which corresponds to tripartite pure states through purification.
        The case of multipartite entanglement is more complicated, for a brief introduction, see Appendix~\ref{app: additional}.
        Entanglement entropy is insufficient to quantify entanglement in mixed states, and no single universal measure exists.
        One important criterion for detecting mixed-state entanglement is the violation of positive partial transpose (PPT) condition~\cite{PhysRevLett.77.1413}.
        The partial transpose of separable state $\rho = \sum_i p_i \rho_A^i \otimes \rho_B^i$ remains positive, $\rho^{T_B} = \sum_i p_i \rho_A^i \otimes (\rho_B^i)^{T_B} \geq 0$.
        Hence, a negative partial transpose (NPT) state $\rho_{AB}^{T_B} \not\geq 0$ is entangled.
        The negativity and logarithmic negativity
        \begin{equation}
            N(\rho) =  (\Vert \rho_{AB}^{T_B} \Vert_1-1)/2, \quad E_N(\rho) = \log \Vert \rho_{AB}^{T_B} \Vert_1,
        \end{equation}
        are entanglement measures of PPT: PPT states satisfy $N(\rho),E_N(\rho) = 0$, whereas NPT states yield $N(\rho), E_N(\rho) > 0$.
        PPT entangled states are bound entangled, meaning that they cannot be distilled into the maximally entangled state~\cite{PhysRevLett.80.5239}.
        Interestingly, generating bound entanglement requires maximally entangled states, reflecting the irreversibility of mixed-state entanglement~\cite{PhysRevLett.86.5803}.
        In addition, bound entangled states can be systematically constructed from the unextendible product bases~\cite{PhysRevLett.82.5385}.
        Another separability criterion is the reductive criterion~\cite{PhysRevA.59.4206}, which states that for separable states $\rho_{AB}$
        \begin{equation}
            \mathcal{I}_A \otimes \Lambda_B(\rho_{AB}) \geq 0, \quad
            \Lambda_A \otimes \mathcal{I}_B(\rho_{AB}) \geq 0
        \end{equation}
        with the map 
        \begin{equation}
            \Lambda(\rho) = I \mathrm{Tr} \rho - \sigma.
        \end{equation}
        A state that violates this criterion is thus entangled and, in particular, distillable~\cite{PhysRevA.59.4206}.  
        Figure~\ref{fig: figure1} schematically illustrates the set of bipartite quantum states.

        \begin{figure}[t]
            \centering
            \includegraphics[width=0.45\textwidth]{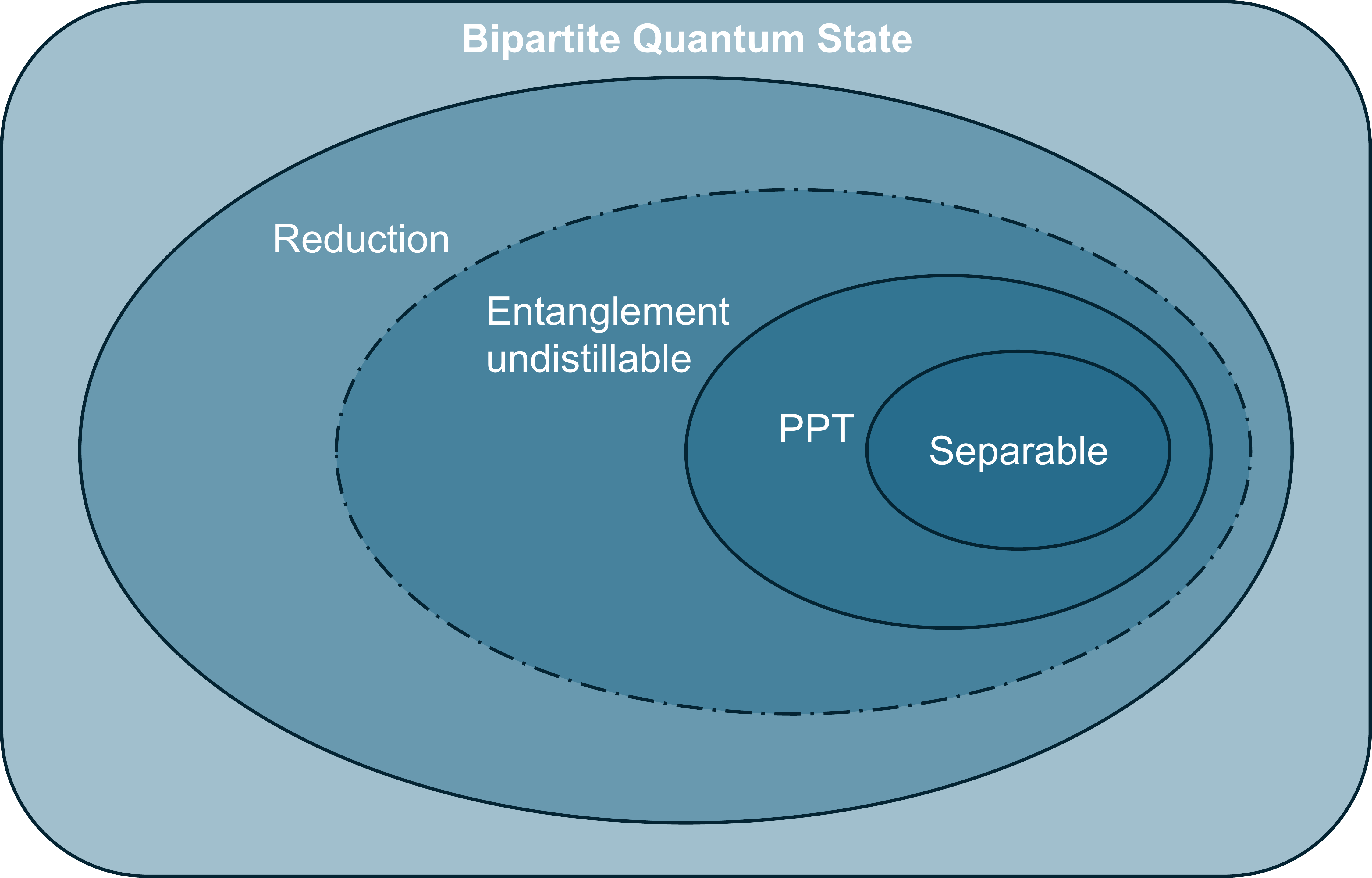}
            \caption{
                A diagram of the sets of bipartite quantum state.
            }
            \label{fig: figure1}
        \end{figure}

        EoP is a measure of total correlations, which is defined as the minimal entanglement entropy of the marginal state $\rho_{AC_1}$ over all possible purification $\ket{\psi_{ABC}}$, i.e. $\mathrm{Tr}_{C}\ket{\psi}\bra{\psi}_{ABC} = \rho_{AB}$ and all possible decompositions of system $C$ into $C_1$ and $C_2$
        \begin{equation}
            E_p(A:B) = \min S(\rho_{AC_1}).
        \end{equation}
        It is lower bounded by half of the mutual information, i.e., $E_p(A:B) \geq I(A:B)/2$, where
        \begin{equation}
            I(A:B) = S(A) + S(B) - S(AB).
        \end{equation}
        Furthermore, EoP is never increasing under the discarding of quantum system~\cite{PhysRevA.91.042323}.
        The asymptotic regularization of the EoP 
        \begin{equation}
            E_p^{\infty}(A:B) = \lim_{n\rightarrow\infty} \frac{1}{n} E_p(\rho_{AB}^{\otimes n})
        \end{equation}
        equals the entanglement cost $E_{LOq}$ under local operations and negligible communication (LOq) ~\cite{10.1063/1.1498001}
        \begin{align}
            & E_{LOq}(\rho_{AB})  \\
            & = \lim_{\epsilon\rightarrow 0} \inf \left\{\frac{m}{n}: d(\Lambda_{\mathrm{LOq}}(\ket{\phi}\bra{\phi}^{\otimes m}), \rho_{AB}^{\otimes n}) \leq \epsilon \right\}. \nonumber
        \end{align}
        Here, $\ket{\phi}$ is a maximally entangled state, $d(\cdot,\cdot)$ is the Bures distance, and $\Lambda_{LOq}$ is an LOq operation.
        This equivalence provides an operational interpretation for the EoP.

        The EoP gap, defined as  
        \begin{equation} \label{eq: g}
            g(A:B) \equiv 2 E_{P}(A:B) - I(A:B) ,
        \end{equation}
        vanishes, i.e., $g(A:B) = 0$, if and only if the tripartite pure state $\ket{\psi_{ABC}}$ is a triangle state~\cite{PhysRevLett.126.120501}, also known as a $2$-producible state~\cite{guhne2005multipartite,balasubramanian2014multiboundary}.
        This means the state can be expressed as a tensor product of states involving at most bipartite entangled [see Fig.~\ref{fig: figure2}(a)]
        \begin{equation} \label{eq: triangle_state}
            \ket{\psi_{ABC}} = \ket{\psi_{A_1 B_2}} \otimes \ket{\psi_{B_1 C_2}} \otimes \ket{\psi_{C_1 A_2}}.
        \end{equation}
        This structure implies a decomposition for each local Hilbert space $\mathcal{H}_{\alpha} = (\mathcal{H}_{\alpha_1} \otimes \mathcal{H}_{\alpha_2}) \oplus \mathcal{H}_{\alpha}^0$, (for $\alpha = A, B, C$), where $\ket{\psi_{\alpha_1\beta_2}} \in \mathcal{H}_{\alpha_1} \otimes \mathcal{H}_{\beta_2}$ for $\alpha, \beta = A, B, C$.
        This vanishing of EoP gap thus serve as a quantitative indicator for the absence of genuine tripartite entanglement in the tripartite system.

        Another related measure is the reflected entropy $S_R(A:B)$~\cite{dutta2021canonical}.
        It is defined as the entanglement entropy $S(\rho_{A\bar{A}})$ of the canonical purification $\ket{\psi}_{AB\bar{A}\bar{B}} = \ket{\sqrt{\rho_{AB}}} = \rho_{AB}^{1/2} \ket{\Phi}_{A\bar{A}} \ket{\Phi}_{B\bar{B}}$, 
        \begin{equation}
            S_R(A:B) = S_{A\bar{A}}(\ket{\psi}_{AB\bar{A}\bar{B}}) = S_{A\bar{A}}(\rho_{AB}^{1/2} \ket{\Phi}_{A\bar{A}}  \ket{\Phi}_{B\bar{B}}).
        \end{equation}
        Here, $\ket{\Phi}_{X\bar{X}} = \sum_i \ket{i}_X \ket{i}_{\bar{X}}$ is the unnormalized maximally entangled state between $X$ and its replica $\bar{X}$, for $X = A, B$.
        The reflected entropy is also lower bounded by mutual information, $S_R(A:B) \geq I(A:B)$, and is conjectured to be twice the EoP, $S_R(A:B) = 2 E_p(A:B)$, for classical holographic state~\cite{dutta2021canonical}.

        The Markov gap $h$, defined as
        \begin{equation}
            h(A: B) \equiv S_{R}(A:B) - I(A:B) = 0,
        \end{equation}
        vanishes $h(A:B) = 0$, if and only if the tripartite pure state $\ket{\psi_{ABC}}$ is a sum of triangle state (SOTS)~\cite{PhysRevLett.126.120501}, Fig.~\ref{fig: figure2}(b),
        \begin{equation}
            \ket{\psi_{ABC}} = \sum_l \sqrt{p_l} \ket{\psi_{A_1B_2}^l} \otimes \ket{\psi_{B_1C_2}^l} \otimes \ket{\psi_{C_1A_2}^l}.
        \end{equation}
        This structure corresponds to a decomposition for each local Hilbert space $\mathcal{H}_{\alpha} = \bigoplus_{l}(\mathcal{H}_{\alpha_1}^l \otimes \mathcal{H}_{\alpha_2}^l) \oplus \mathcal{H}_{\alpha}^0$, (for $\alpha = A, B, C$), with $\ket{\psi_{\alpha_1\beta_2}^l} \in \mathcal{H}_{\alpha_1}^l \otimes \mathcal{H}_{\beta_2}^l$.
        The Markov gap equals the CMI $I(\bar{A}:B|A)$, where the CMI is defined as 
        \begin{equation}
            I(A:C|B) = S(AB) + S(BC) - S(B) -S(ABC).
        \end{equation}
        This quantity is non-negative due to the strongly subadditivity of entropy and is related to the quantum Markov recovery property~\cite{hayden2004structure,fawzi2015quantum,PhysRevLett.115.050501,doi:10.1098/rspa.2015.0623,sutter2018approximate}
        \begin{equation} \label{eq: markov_recovery}
            I(A:C|B) \geq - \max_{\mathcal{R}_{B\rightarrow BC}} \log F(\rho_{ABC}, \mathcal{R}_{B\rightarrow BC}(\rho_{AB})).
        \end{equation} 
        Here, $\rho_{AB}$ is the marginal state of $\rho_{ABC}$, and $\mathcal{R}_{B\rightarrow BC}$ is a universal recovery map independent of $\rho_A$.
        Consequently, the Markov gap is related to the fidelity of a partial Markov recovery problem~\cite{hayden2021markov},
        \begin{align}
            h(A:B) & \geq - \max_{\mathcal{R}_{A\rightarrow A \bar{A}}} \log F(\rho_{AB\bar{A}}, \mathcal{R}_{A\rightarrow A \bar{A}}(\rho_{AB})), \\
            h(A:B) & \geq - \max_{\mathcal{R}_{B\rightarrow B \bar{B}}} \log F(\rho_{AB\bar{B}}, \mathcal{R}_{B\rightarrow B \bar{B}}(\rho_{AB})),
        \end{align}
        where $\rho_{AB\bar{B}}$ and $\rho_{AB\bar{A}}$ are marginal states of the canonical purification $\rho_{AB\bar{A}\bar{B}}$ of $\rho_{AB}$.
        
        \begin{figure}[t]
            \centering
            \includegraphics[width=0.48\textwidth]{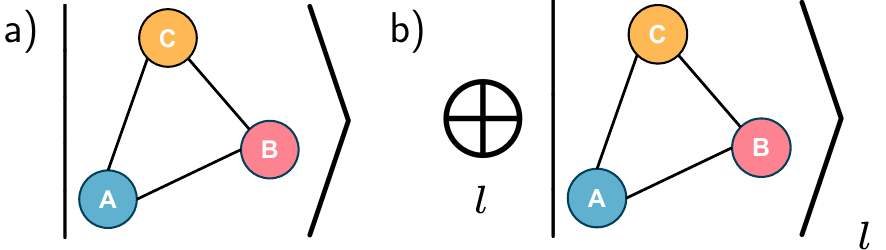}
            \caption{Diagrams of a) the triangle states and b) the sum of triangle states.}
            \label{fig: figure2}
        \end{figure}

    \begin{figure*}[t]
        \centering
        \includegraphics[width=0.9\textwidth]{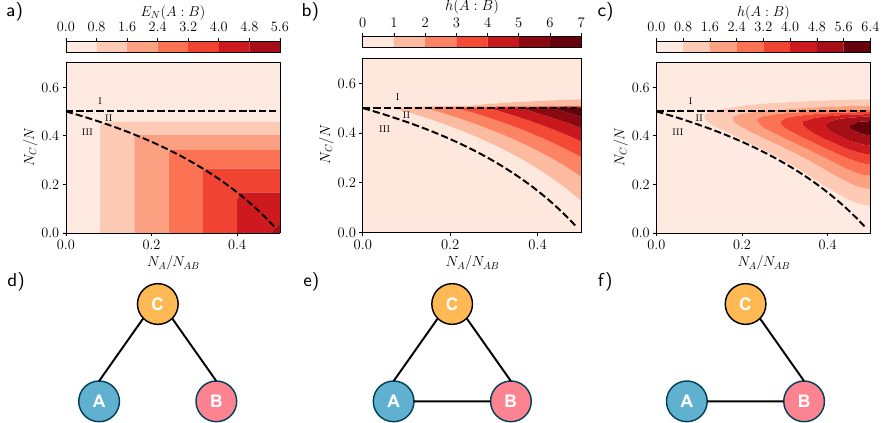}
        \caption{
            (a) The phase diagram of PPT entanglement phase transition in average logarithmic negativity $E_N(A:B)$ of Haar random state with fixed $N_{AB} = 10$ up to leading order.
            The dash lines distinguish the three phases of Haar ensemble, where region~I is PPT phase, region~II is entanglement saturation (ES) phase, and region~III is maximal entanglement (ME) phase.
            Assuming $N_A<N_B$, only the region $N_{A}/N_{AB}\leq 0.5$ is displayed.
            The region $N_{A}/N_{AB}\geq 0.5$ is symmetric to this region.
            (b) The average Markov gap $h(A:B)$ of Haar random state with fixed $N_{AB} = 10$ up to leading order.
            (c) The numerical average Markov gap $h(A:B)$ of Haar random state with fixed $N_{AB} = 10$, where $200$ instances are sampled for each point in the panel.
            (d)-(f) The schematic diagrams of the three phases, PPT, ES, and ME correspondingly, with the entanglement model of Bell pairs, where the solid lines between systems represent Bell pairs.
            The base of logarithm is $2$ in this figure.
        }
        \label{fig: figure3}
    \end{figure*}
    
    \subsection{Haar random state and its entanglement properties}

    A Haar random state is an ensemble of states randomly sampled from Hilbert space according to Haar measure, the uniform probability measure.
    It has wide applications in quantum physics, including quantum computation and quantum communication~\cite{PhysRevA.55.1613,PhysRevA.60.1888,PhysRevA.80.012304,PhysRevLett.106.180504,PhysRevX.6.041044,boixo2018characterizing,arute2019quantum}, information scrambling and quantum chaos~\cite{PhysRevB.98.064309,PhysRevD.98.086026,sekino2008fast}, as well as black hole physics~\cite{sekino2008fast,Hayden_2007,shenker2014black,kudler2022negativity}.
    A Haar random state can be represented as an ensemble of state $\{\hat{U}\ket{\psi_0}\}$, where $\hat{U}$ is a random unitary operator distributed in the Haar measure on the unitary group $U(\mathcal{H})$, and $\ket{\psi_0}$ is an arbitrary fixed pure state.
    The Haar measure is the unique probability measure on the unitary group that is both left-invariant and right-invariant~\cite{watrous2018theory}
    \begin{align}
        &\mathbb{E}_{U}[1] =  \int_{\mathrm{Haar}} dU = 1, \\
        &\mathbb{E}_{VU}[f(U)] = \mathbb{E}_{UV}[f(U)] = \mathbb{E}_{U}[f(U)],
    \end{align}
    where $U, V \in U(\mathcal{H})$ are the unitary matrices on the Hilbert space $\mathcal{H}$.
    In high dimension, the individual instances of Haar random states exhibit similar properties, which is a general phenomenon arising from the concentration of the measure~\cite{ledoux2001concentration}.

    For a bipartite Haar random state $\ket{\psi_{AB}}$ with system dimensions $D_A$ and $D_B$ respectively, the entanglement entropy almost saturates the maximum value. This is captured by Page's formula~\cite{PhysRevLett.71.1291}
    \begin{equation}
        \bar{S}_A \equiv \bar{S}(\rho_A) = \log \min(D_{A},D_B) - \frac{\min(D_A,D_B)}{2 \max(D_A,D_B)}.
    \end{equation}
    For a tripartite Haar random state $\ket{\psi_{ABC}}$ with system dimensions $D_A, D_B$ and $D_C$, the marginal state $\rho_{AB}$ is a Haar induced state governed by the induced measure $\nu_{D_{AB}, D_C}$, where $D_{AB}$ and $D_C$ are the dimensions of systems $AB$ and $C$, respectively.
    For further details on the induced measure, see Appendix~\ref{app: additional}.

    Threshold phenomena~\cite{collins2016random} have been found for the marginal state $\rho_{AB}$ of Haar random state.
    Specifically, there exist a threshold function $s_0(D)$, such that: \textit{(a)} if $D_C \leq (1-\epsilon) s_0$, the Haar induced state $\rho_{AB}$ lacks property $X_D$ in probability 
    \begin{equation}
        \lim_{D\rightarrow\infty}\mathbb{P}_{\nu_{D_{AB},D_C}}(\rho \in X_D) = 0;
    \end{equation}
    conversely, \textit{(b)} if $D_C \geq (1+\epsilon) s_0$, the Haar induced state $\rho_{AB}$ possess the property $X_D$ in probability
    \begin{equation}
        \lim_{D\rightarrow\infty}\mathbb{P}_{\nu_{D_{AB},D_C}}(\rho \in X_D) = 1.
    \end{equation}
    Here, $X_D$ denotes a set of states with a specific property, such as separability ($\mathrm{SEP}_{D_A,D_B}$) or positive partial transpose ($\mathrm{PPT}_{D_A,D_B}$). 
    The threshold for separability $s_{\mathrm{SEP}}$ satisfies~\cite{PhysRevA.85.030302,aubrun2014entanglement}
    \begin{align}
        & c D_A D_B \min(D_A,D_B) \leq s_{\mathrm{SEP}} \\
        &~~~~ \leq C D_A D_B \min(D_A,D_B) \log^2(D_AD_B), \nonumber 
    \end{align}
    and the threshold for PPT $s_{\mathrm{PPT}}$ is given by~\cite{PhysRevA.85.030302,PhysRevA.85.062331}
    \begin{equation}
        s_{\mathrm{PPT}} = 4 D_A D_B.
    \end{equation}
    For a system of qubits, where $D_{A,B,C} = 2^{N_{A,B,C}}$, in the thermodynamic $N \equiv N_A + N_B + N_C \rightarrow \infty$ with fixed proportions $n_{A,B,C} = {N_{A,B,C}}/{N}$, the thresholds for separability and PPT become
    \begin{equation}
    n_{\mathrm{SEP}} = [1 + \min(n_A, n_B)]/2 \leq {3}/{5}, \quad
    n_{\mathrm{PPT}} = {1}/{2}.
    \end{equation} 
    Consequently, the marginal state $\rho_{AB}$ is typically NPT entangled when $n_{C}<s_{\mathrm{PPT}}$, PPT entangled when $n_{\mathrm{PPT}}<n_{C}<n_{\mathrm{SEP}}$, and separable when $n_{C}>n_{\mathrm{SEP}}$.  

    The threshold for PPT divides the marginal state $\rho_{AB}$ into three phases~\cite{PRXQuantum.2.030347}.
    The phase diagram, shown in Fig.~\ref{fig: figure3}(a), have been observed in experiment~\cite{liu2023observation}.
    Note that the critical point for these phases is $n_{\mathrm{PPT}} = 1/2$, thus the tripartite pure states $\ket{\psi_{ABC}}$ in the PPT and maximal entanglement (ME) phases have the same entanglement structure up to a relabeling of the subsystems. 

    Reference~\cite{PRXQuantum.2.030347} employed a phenomenological model where the entanglement between the systems is assumed to consist of tensor products of Bell pairs. 
    This model the phase diagram predicts the phase diagram using Page's formula at leading order. 
    According to this model, there is no Bell pair between the systems $A$ and $B$ in the PPT phase, no Bell pair between the systems $A$ and $C$ in the ME phase, and there are Bell pairs between every pair of systems in the ES phase.
    Schematic diagrams of these three phases are shown in Fig.~\ref{fig: figure3}(d)-(f), respectively.

    \subsection{Brief introduction to holographic duality}

    In holographic duality, the entropy of a holographic state $\rho_A$ is given by the generalized entropy~\cite{faulkner2013quantum,engelhardt2015quantum}, up to non-perturbative corrections $\propto O(e^{-1/G_N})$
    \begin{equation}
        S(A) = \min_{\gamma_A} S_{\mathrm{gen}}(\gamma_A) \equiv \min_{\gamma_A} \left[\frac{\mathcal{A}(\gamma_{A})}{4G_N} + S_{\mathrm{bulk}}(\gamma_A)\right].
    \end{equation} 
    Here, $\mathcal{A}(\gamma_{A})$ is the area of the surfaces $\gamma_{A}$ in bulk that is homologous to the boundary region $A$ corresponding to the state, $G_N$ is the Newton constant of the bulk geometry, and $S_{\mathrm{bulk}}$ is the entanglement entropy of bulk region $\Omega$ bounded by $\partial \Omega = A\cup \gamma_{A}$.
    The minimal surface $\gamma_A$ is called the quantum extremal surface~\cite{engelhardt2015quantum}.
    In semiclassical limit, it approaches the classical extremal surface, known as the Ryu-Takayanagi (RT) surface $\gamma_{A,\mathrm{RT}}$~\cite{PhysRevLett.96.181602,Ryu_2006}.
    Perturbatively, the entropy is expressed as
    \begin{equation}
        S(A) = \frac{\mathcal{A}(\gamma_{A,\mathrm{RT}})}{4G_N} + S_{\mathrm{bulk}}(\gamma_{A,\mathrm{RT}}),
    \end{equation}    
    where the Ryu-Takayanagi (RT) surface $\gamma_{A,\mathrm{RT}}$ is the minimal-area surface in the bulk geometry among all surfaces sharing the same boundary $\partial A$ as the boundary region $A$.
    The term $S_{\mathrm{bulk}} \propto O(1/G_N)$ represents the entanglement entropy of the bulk region $\Omega_{\mathrm{RT}}$ enclosed by the RT surface $\gamma_{A,\mathrm{RT}}$ and the boundary region $A$.

    The reflected entropy $S_R(A:B)$ is formulated as~\cite{dutta2021canonical}
    \begin{equation}
        S_R(A:B) = \min_{\sigma_{A:B}} \left[\frac{2 \mathcal{A}(\sigma_{A:B})}{4G_N} + S_{R,\mathrm{bulk}}(\sigma_{A:B})\right].
    \end{equation}
    Here, $\sigma_{A:B}$ is the surface that divides the entanglement wedge $W(A:B)$, i.e., the bulk region bounded by the boundary region $AB$ and its RT surfaces $\gamma_{AB,\mathrm{RT}}$, into two regions through which the surface $\sigma_{A:B}$ is homologous to the boundary region $A$ or $B$, respectively.
    $S_{R,\mathrm{bulk}}$ is the reflected entropy between the two bulk regions separated by the surface $\sigma_{A:B}$.
    The minimal surface $\sigma_{A:B}$ is called the entanglement wedge cross section.
    The Markov gap is given by
    \begin{align}
        &h(A:B) = \frac{\mathcal{A}(\gamma_{A,\mathrm{KRT}}) - \mathcal{A}(\gamma_{A,\mathrm{RT}})}{4G_N} \nonumber \\
        &~~~~+ \frac{\mathcal{A}(\gamma_{B,\mathrm{KRT}}) - \mathcal{A}(\gamma_{B,\mathrm{RT}})}{4G_N} + h_{\mathrm{bulk}}(A:B),
    \end{align}
    where $\gamma_{A,\mathrm{KRT}}$ is the kicked-Ryu-Takayanagi (KRT) surface~\cite{hayden2021markov}, and $h_{\mathrm{bulk}}(A:B)\propto O(1/G_N) $ denotes the bulk contribution to the Markov gap.
    The KRT surface $\gamma_{A,\mathrm{KRT}}$ is formed by combining the entanglement wedge cross section $\sigma_{A:B}$ with a portion of the RT surfaces $\gamma_{AB,\mathrm{RT}}$ of the boundary region $AB$, such that the union $\gamma_{A,\mathrm{KRT}}$ is homologous to the RT surface $\gamma_{A,\mathrm{RT}}$ and the boundary region $A$.
    A diagram illustrating these surfaces is provided in Fig.~\ref{fig: figure4}.

    \begin{figure}
        \centering
        \includegraphics[width=0.4\textwidth]{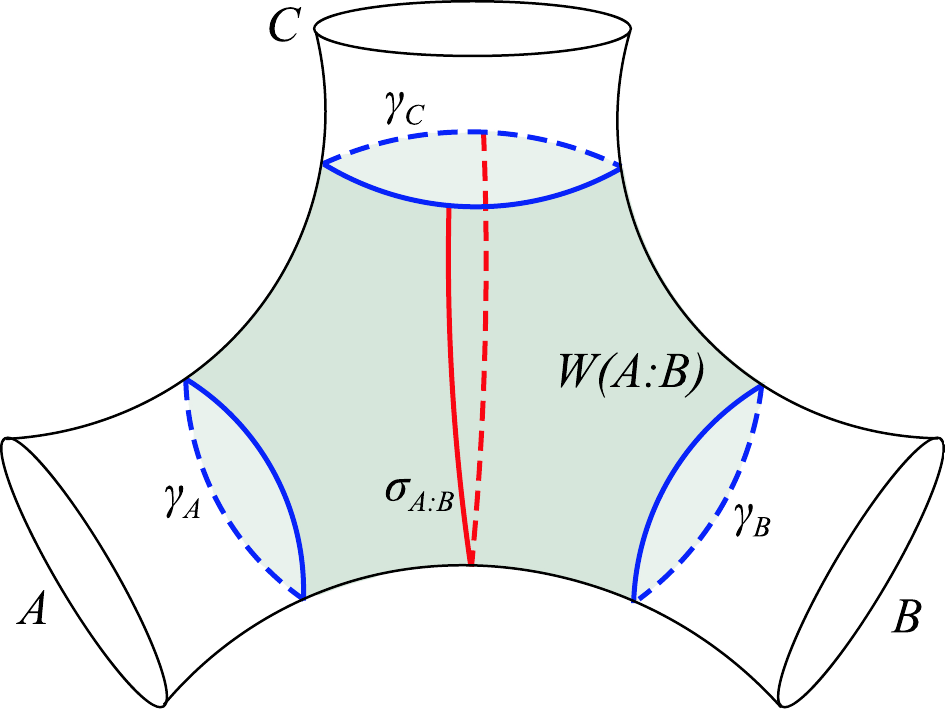}
        \caption{A diagram of a three-boundary wormhole. $\gamma_{A}$, $\gamma_{B}$, and $\gamma_{AB}$ are RT surfaces of $A$, $B$, and $AB$, respectively.
            Entanglement wedge $W(A:B)$ is the bulk region between RT surfaces $\gamma_{A}$, $\gamma_{B}$, and $\gamma_{AB}$. 
            $\sigma_{A:B}$ is the entanglement wedge cross section of $A$ and $B$, which divides $\gamma_{AB}$ into two parts.
            The union of $\sigma_{A:B}$ and one part of $\gamma_{AB}$ is one of the KRT surfaces.}
        \label{fig: figure4}
    \end{figure}

\section{Bound Entanglement Has Nonzero Markov Gap} \label{sec: triangle_state}

    The mean result of this section is the following theorem.
    \begin{theorem} \label{theorem: bound_entangled} 
        The marginal state $\rho_{AB}$ of SOTS is distillable if entangled, i.e. not a bound entangled state.
    \end{theorem} \noindent
    Since Markov gap vanishes if and only if the canonical purification is SOTS~\cite{PhysRevLett.126.120501}, the bound entangled state thus must have a nonzero Markov gap $h(A:B) > 0$.
    To prove this, we require the following lemma.
    \begin{lemma}\label{lemma: SOTS_separable} 
        For a sum of triangle states of tripartite system $ABC$, the marginal state 
        \begin{equation}
            \rho_{AB} = \sum_l p_l \ket{\psi_{A_1 B_2}^l}\!\bra{\psi_{A_1 B_2}^l} \otimes \rho_{B_1}^l \otimes \rho_{A_2}^l,    
        \end{equation}
        is separable if and only if each $\ket{\psi_{A_1 B_2}^l}$ is separable, and consequently, 
        \begin{align} \label{eq: SOTS_separable}
            S_{A:B} & \equiv \sum_l p_l S_{A_1}(\ket{\psi_{A_1 B_2}^l})  \\
            & = \frac{1}{2}[I(A:B) - g(A:B)] \nonumber \\
            & = E_p(A:B) - g(A: B) = 0. \nonumber
        \end{align}
    \end{lemma}\noindent
    Here, Eq.~(\ref{eq: SOTS_separable}) defines the quantity $S_{A:B}$, and the second and third equalities are follow from Proposition~\ref{proposition: SOTS_g}.
    \begin{proof}
        Assume this marginal state is separable.
        The projector $\hat{P}_{l}^{(AB)}$ onto the Hilbert space $\mathcal{H}_{A}^l \otimes \mathcal{H}_B^l$ is a local operator
        \begin{equation}
            \hat{P}_{l}^{(AB)} = \hat{P}_{l}^{(A_1)} \otimes \hat{P}_{l}^{(A_2)} \otimes \hat{P}_{l}^{(B_1)} \otimes \hat{P}_{l}^{(B_2)}.
        \end{equation}
        Projection therefore does not generate entanglement, so, the projected state is still a separable state
        \begin{equation}
            \hat{P}_{l}^{(AB)}\rho_{AB}\hat{P}_{l}^{(AB)} = p_l \ket{\psi_{A_1 B_2}^l}\!\bra{\psi_{A_1 B_2}^l} \otimes \rho_{B_1}^l \otimes \rho_{A_2}^l.
        \end{equation}
        It follows that each $\ket{\psi_{A_1 B_2}^l}$ is separable.
        Conversely, if each $\ket{\psi_{A_1 B_2}^l}$ is separable, the marginal state is separable by definition.

        Furthermore, each $\ket{\psi_{A_1 B_2}^l}$ is separable if and only if $S_{A_1}(\ket{\psi_{A_1 B_2}^l}) = 0$, which implies  $S_{A:B} \equiv \sum_l p_l S_{A_1}(\ket{\psi_{A_1 B_2}^l}) = 0$.
        Direct calculation shows that (see Proposition~\ref{proposition: SOTS_g})
        \begin{align}
            S_{A:B} & = \frac{1}{2}[I(A:B) - g(A:B)] \\
            & = E_p(A:B) - g(A: B) \nonumber
        \end{align}
    \end{proof}

    \begin{proof}[Proof of Theorem~\ref{theorem: bound_entangled}]
        Consider a SOTS
        \begin{equation}
            \ket{\psi_{ABC}} = \sum_l \sqrt{p_l} \ket{\psi_{A_1B_2}^l} \otimes \ket{\psi_{B_1C_2}^l} \otimes \ket{\psi_{C_1A_2}^l}.
        \end{equation}
        Since the Hilbert spaces $\mathcal{H}_{A}^l$, $\mathcal{H}_{B}^l$, and $\mathcal{H}_{C}^l$ are orthogonal for different index $l$, the marginal state is
        \begin{equation}
            \rho_{AB} = \sum_l p_l \ket{\psi_{A_1 B_2}^l}\!\bra{\psi_{A_1 B_2}^l} \otimes \rho_{B_1}^l \otimes \rho_{A_2}^l.  
        \end{equation}
        Since the projection $\hat{P}_A^l\otimes\hat{P}_B^l$ on the space $\mathcal{H}_{A}^l \otimes \mathcal{H}_{B}^l$ is a local operation, the projected state  
        \begin{equation}
            \ket{\psi_{A_1 B_2}^l}\!\bra{\psi_{A_1 B_2}^l} \otimes \rho_{B_1}^l \otimes \rho_{A_2}^l
        \end{equation}
        can be extracted from the marginal state via LOCC.
        Moreover, Since any pure bipartite entangled state is distillable~\cite{PhysRevA.53.2046}, Bell states can be distilled from this projected state if it is entangled.
        According to Proposition~\ref{lemma: SOTS_separable}, the marginal state $\rho_{AB}$ is entangled if and only if at least one of these projected state is entangled, and thus is distillable.
    \end{proof}

    Furthermore, a SOTS has a vanishing Markov gap $h(A:B) = 0$.
    In the special case of a triangle state, the EoP gap also vanishes $g(A:B) = 0$.
    Here, we are interested in the value of $g(A:B)$ for a general SOTS, where it is non-vanishing.
    Given the orthogonality of $(\mathcal{H}_{\alpha_1}^l \otimes \mathcal{H}_{\alpha_2}^l)$, $g(A:B)$ for a SOTS can be calculated explicitly. 
    \begin{proposition} \label{proposition: SOTS_g} 
        For a sum of the triangle state 
        \begin{equation}
            \ket{\psi_{ABC}} = \sum_l \sqrt{p_l} \ket{\psi_{A_1 B_2}^l} \ket{\psi_{B_1 C_2}^l} \ket{\psi_{C_1 A_2}^l},
        \end{equation}
        the EoP gap is given by
        \begin{align} \label{eq: SOTS_g}
            g(A:B) & = g(B:C) = g(A:C) \\
            & = H(p_l) \equiv -\sum_l p_l \log p_l. \nonumber
        \end{align}
    \end{proposition}

    \begin{proof}
        The EoP is defined as
        \begin{equation}
            E_p(A:B) = \min_{\hat{U}_C} S_{AC_1}(\hat{U}_C \ket{\psi_{ABC}}).
        \end{equation}
        Since the Hilbert spaces $\mathcal{H}_{X}^l$ are orthogonal, the marginal state of $\hat{U}_C \ket{\psi_{ABC}}$ is 
        \begin{equation}
            \tilde{\rho}_{AC_1} =  \sum_l p_l \rho_{A_1}^l \otimes \mathcal{N}^l(\ket{\psi_{C_1 A_2}^l}\!\bra{\psi_{C_1 A_2}^l}),
        \end{equation}
        where $\mathcal{N}^l(\cdot) = \mathrm{Tr}_{B_1C_2}[\hat{U}_C(\cdot \otimes \ket{\psi_{B_1 C_2}^l}\!\bra{\psi_{B_1 C_2}^l})\hat{U}_C^{\dagger}]$.
        The entropy is
        \begin{align}
            & S_{AC_1}(\hat{U}_C \ket{\psi_{ABC}}) = \sum_l p_l[S_{A_1}(\ket{\psi_{A_1 B_2}^l}) \\
            &~~~~~~~~+ S_{C_1 A_2}( \mathcal{N}^l(\ket{\psi_{C_1 A_2}^l}\!\bra{\psi_{C_1 A_2}^l}))] + H(p_l). \nonumber
        \end{align}
        where 
        \begin{equation}
            S_{C_1 A_2}( \mathcal{N}^l(\ket{\psi_{C_1 A_2}^l}\!\bra{\psi_{C_1 A_2}^l})) \geq 0,
        \end{equation}
        and the equality holds when $\hat{U}_C = \hat{I}$.
        Therefore, the EoP is minimizied at $\hat{U}_C = \hat{I}$, yielding
        \begin{equation}
            E_p(A:B) = \sum_l p_l S_{A_1}(\ket{\psi_{A_1 B_2}^l}) + H(p_l).
        \end{equation}

        Then, we calculate the mutual information
        \begin{equation}
            I(A:B) = S_A + S_B - S_{AB}.
        \end{equation} 
        The relevant density matrices are
        \begin{align}
            \rho_{AB} & = \sum_l p_l \ket{\psi_{A_1 B_2}^l}\!\bra{\psi_{A_1 B_2}^l} \otimes \rho_{B_1}^l \otimes \rho_{A_2}^l, \\
            \rho_{A} & = \sum_l p_l \rho_{A_1}^l \otimes \rho_{A_2}^l, \\
            \rho_{B} & = \sum_l p_l \rho_{B_2}^l \otimes \rho_{B_1}^l.
        \end{align}
        Their entropies are 
        \begin{align}
            S_A & = \sum_l p_l [S_{A_1}(\ket{\psi_{A_1 B_2}^l}) + S_{A_2}(\ket{\psi_{C_1 A_2}^l})] + H(p_l), \\
            S_B & = \sum_l p_l [S_{B_2}(\ket{\psi_{A_1 B_2}^l}) + S_{B_1}(\ket{\psi_{B_1 C_2}^l})] + H(p_l), \\
            S_{AB} & = \sum_l p_l [S_{A_2}(\ket{\psi_{C_1 A_2}^l}) + S_{B_1}(\ket{\psi_{B_1 C_2}^l})] + H(p_l).
        \end{align}
        Thus, the mutual information is
        \begin{equation}
            I(A:B) = 2\sum_l p_l S_{A_1}(\ket{\psi_{A_1 B_2}^l}) + H(p_l),
        \end{equation}
        and consequently,
        \begin{equation}
            g(A:B) = 2E_p(A:B) - I(A:B) = H(p_l).
        \end{equation}
        Due to the permutation symmetry among labels $A, B$ and $C$, it follows that
        \begin{equation}
            g(A:B) = g(B:C) = g(A:C) = H(p_l).
        \end{equation} 
    \end{proof} 
    From the calculation of Eq.~(\ref{eq: SOTS_g}) follows
    \begin{equation} \label{eq: SOTS_entropy}
        S_A = S_{A:B} + S_{A:C} + g(A:B).
    \end{equation}
    
    In contrast with the Haar random state, we now consider random stabilizer states, which form a $3$-design of Haar random state, as a counterpart example, and to illustrate above results.
    This comparison reveals that the bound entanglement properties of Haar random states differ from those of random stabilizer states.
    A random stabilizer state is locally equivalent to a tensor product of Bell states and GHZ states in a tripartite system~\cite{10.1063/1.2203431}
    \begin{align}
        \ket{\psi_{ABC}^{\mathrm{st}}} & = \mathcal{U}_A\mathcal{U}_B \mathcal{U}_C \ket{0}^{\otimes s_A}  \ket{0}^{\otimes s_B}  \ket{0}^{\otimes s_C}  \ket{\mathrm{GHZ}}^{\otimes g_{ABC}} \nonumber\\
        &~~~~ \otimes \ket{\mathrm{EPR}}^{\otimes e_{AB}} \ket{\mathrm{EPR}}^{\otimes e_{BC} } \ket{\mathrm{EPR}}^{\otimes e_{AC}},
    \end{align}
    and the logarithmic negativity of its marginal state $\rho_{AB}$ is given by the simple function
    \begin{equation}
        E_N = \frac{1}{2} \log(p_2^2/p_3) = e_{AB}\log 2.
    \end{equation}
    It is straightforward to see that the marginal state $\rho_{AB}$ of $\ket{\psi_{ABC}^{\mathrm{st}}}$ is separable if and only if 
    \begin{equation}
        e_{AB} = 0.
    \end{equation}
    This implies that for stabilizer states, separability is equivalent to the PPT condition. 
    Consequently, the thresholds for separability and PPT coincide for random stabilizer states.
    Since GHZ states are specific instances of SOTS, this observation is consistent with Theorem~\ref{theorem: bound_entangled}.
    
    On the other hand, the entropy of the stabilizer state $\ket{\psi_{ABC}^{\mathrm{st}}}$ for a subsystem $X = A,B,C$ is 
    \begin{equation} \label{eq: stabilizer_entropy}
        S_{X} =  \left(\sum_{Y\neq X} e_{XY} + g_{ABC}\right) \log 2,
    \end{equation}
    and from Eqs.~(\ref{eq: SOTS_g}) and~(\ref{eq: SOTS_separable}), we have
    \begin{align}
        S_{X:Y} & = e_{XY} \log 2, \\
        g(A:B) & = g_{ABC} \log 2,
    \end{align}
    which consistent with Eq.~(\ref{eq: SOTS_entropy}).
    Here, we observe that $S_{A:B} = E_N(\rho_{AB})$, which quantifies the numbers of Bell pairs between $A$ and $B$, while $g(A:B)$ measures the number of GHZ state shared among $A$, $B$, and $C$ in the stabilizer states~\cite{nguyen2018entanglement,PRXQuantum.2.030313}. 
    This suggests a possible general relation between $S_{A:B}$ and $E_N(\rho_{AB})$ in SOTS (see Proposition~\ref{proposition: negativity}). 
    Thus, the average bipartite entanglement of random stabilizer state between $A$ and $B$ is 
    \begin{equation}
        \bar{e}_{AB} \log 2 = \frac{1}{2}(\bar{S}_A + \bar{S}_B - \bar{S}_C - \bar{g}(A:B)),
    \end{equation}
    where $g = g(A:B) = g(B:C) = g(A:C)$. 
    It has also been shown that the average bipartite entanglement entropy of a random stabilizer states satisfies~\cite{PhysRevA.74.062314}
    \begin{equation}
        \bar{S}_X \geq \log D_{X\min} - \frac{D_{X\min}}{D_{X\max}}.
    \end{equation}
    Furthermore, using the upper bound on entropies $\bar{S}_X \leq \log D_{X\min}$, and the fact that average number of GHZ state is relatively small~\cite{PhysRevA.74.062314}, $\bar{g}_{ABC} \leq O(D^{-\alpha})$, the average number of Bell pairs between system $AB$ is 
    \begin{widetext}
        \begin{equation}
            \bar{e}_{AB}  \in \left\{\begin{array}{lc}
                \left[0, \frac{D_A D_B}{2\log 2 D_C}\right], & n_C > 1/2 \\ 
                \min\{N_A,N_B\}  + O(D^{-\alpha})\left[- C_1, C_2\right], & \max\{n_A,n_B\} > 1/2 \\
                \frac{1}{2}(N - 2 N_C) + O(D^{-\alpha})\left[- C_1, C_2\right], & \max\{n_A,n_B,n_C\} < 1/2,
            \end{array}\right.
        \end{equation}     
    \end{widetext}
    which is consistent with the Bell pair model, Fig.~\ref{fig: figure3}(d)~(e), at leading order.
    Therefore, a critical point exists for $\bar{e}_{AB}$ at $n_C = 1/2$, similar to the PPT threshold of Haar random states, dividing the phase diagram into three regions, i.e. PPT, ME, and ES phases. 

    In PPT phase, there is a region $n_{\mathrm{PPT}} < n_C < n_{\mathrm{SEP}}$ where Haar random states are typically bound entanglement.
    However, as shown above, random stabilizer states cannot be bound entangled. 
    In ES phase, the partial transpose spectrum of Haar random state is widely distributed around zero~\cite{PRXQuantum.2.030347,PhysRevA.109.012422}.
    This qualitative difference from the spectral distribution of random stabilizer states~\cite{PhysRevA.109.012422}, which concentrated on the $\pm \sqrt{p_3} \neq 0$, where $p_3$ is the third partial transpose moment, highlights the distinct entanglement properties of Haar random states.

    \begin{proposition} \label{proposition: negativity}
        The marginal state $\rho_{AB}$ of a sum of triangle state $\ket{\psi_{ABC}}$
        \begin{equation}
            \rho_{AB} = \sum_l p_l \ket{\psi_{A_1 B_2}^l}\!\bra{\psi_{A_1 B_2}^l} \otimes \rho_{B_1}^l \otimes \rho_{A_2}^l,    
        \end{equation}
        satisfies 
        \begin{equation}
            E_N(\rho_{AB}) \geq S_{A:B} = \frac{1}{2}[I(A:B) - g(A:B)].
        \end{equation}
        Equality holds if and only if the states $\ket{\psi_{A_1 B_2}^l}$ are equivalent to the same numbers of Bell pairs up to local unitary.
    \end{proposition}
    \begin{proof}
        The logarithmic negative satisfies the properties of concavity~\cite{PhysRevA.65.032314}
        \begin{equation}
            E_N\left(\sum_i p_i \rho_i\right) \geq \sum_i p_i E_N(\rho_i),
        \end{equation}
        additivity
        \begin{equation}
            E_N(\rho_1\otimes \rho_2) = E_N(\rho_1) + E_N(\rho_2),
        \end{equation}
        and for pure state $\rho_{AB} = \ket{\psi}\bra{\psi}$
        \begin{equation}
            E_N(\rho_{AB}) \geq S(A) = S(\rho_A),
        \end{equation}
        where $S(A)$ is the entanglement entropy of the pure state. 
        For the marginal state $\rho_{AB}$, the inequality follows from the above properties and the fact that $E_N = 0$ for separable state $\rho_{B_1}^l \otimes \rho_{A_2}^l$.

        For the conditions of equality, consider the partial transpose 
        \begin{equation}
            \rho_{AB}^{T_B} = \sum_l p_l \ket{\psi_{A_1 B_2}^l}\!\bra{\psi_{A_1 B_2}^l}^{T_B} \otimes (\rho_{B_1}^l)^{T_B} \otimes \rho_{A_2}^l.
        \end{equation}
        Since the subspace $\mathcal{H}_{AB}^l$ are orthogonal, we have 
        \begin{equation}
            \Vert \rho_{AB}^{T_B} \Vert_1 = \sum_l p_l \Vert \ket{\psi_{A_1 B_2}^l}\!\bra{\psi_{A_1 B_2}^l}^{T_B} \Vert_1,
        \end{equation}
        and the logarithmic negativity
        \begin{align}
            E_N(\rho_{AB}) & = \log \Vert \rho_{AB}^{T_B} \Vert_1 \nonumber\\
            & = \log \sum_l p_l \exp E_N(\ket{\psi_{A_1 B_2}^l}) \nonumber \\
            & \geq \sum_l p_l E_N(\ket{\psi_{A_1 B_2}^l}) \nonumber \\
            & \geq \sum_l p_l S_{A}(\ket{\psi_{A_1 B_2}^l}).
        \end{align}
        Equality holds if and only if $E_N(\ket{\psi_{A_1 B_2}^l}) = E_N(\ket{\psi_{A_1 B_2}^{l'}}) = S_{A}(\ket{\psi_{A_1 B_2}^l})$ for different $l$ and $l'$.
        Note that $E_N(\ket{\psi_{A_1 B_2}^l}) = S_{A}(\ket{\psi_{A_1 B_2}^l})$ if and only if $\ket{\psi_{A_1 B_2}^l}$ are equivalent to maximal entangled up to local unitary.
    \end{proof}

\section{Tripartite Entanglement in Haar Random State} \label{sec: tripartite_entanglement}

    In tripartite state, tripartite entanglement is a necessary condition for the bound entanglement in the bipartite marginal states.
    Although the absence of tripartite entanglement can be characterized by a vanishing EoP gap, $g(A:B)= 0$, this quantity is generally difficult to compute.
    In the region $n_{\mathrm{PPT}} = 1/2 < n_C < n_{\mathrm{SEP}}$, the marginal state $\rho_{AB}$ of a Haar random state is typically bound entangled, denoted by $\mathbb{P}[\mathrm{BND}_{AB}] \rightarrow 1$. 
    Therefore, tripartite entanglement must be present 
    \begin{equation}
        \mathbb{P}[g>0] \geq \mathbb{P}[\mathrm{BND}_{AB}]\rightarrow 1.
    \end{equation}
    Furthermore, by Theorem~\ref{theorem: bound_entangled}, this also implies the existence of non-SOTS-type tripartite entanglement 
    \begin{equation}
        \mathbb{P}[h>0] \geq \mathbb{P}[\mathrm{BND}_{AB}]\rightarrow 1.
    \end{equation}
    In this section, we investigate the tripartite entanglement in Haar random state $\ket{\psi_{ABC}}$ by analyzing  the typicality of conditional mutual information.

    Our motivation stems from the following observation.
    The marginal state of a triangle state, Eq.~(\ref{eq: triangle_state}), is
    \begin{equation}
        \rho_{AB} = \ket{\psi_{A_1 B_2}}\bra{\psi_{A_1 B_2}} \otimes \rho_{B_1} \otimes \rho_{A_2},
    \end{equation}
    which is separable if and only if $\ket{\psi_{A_1 B_2}}$ is separable. 
    It requires that the tripartite 2-producible state is of the form 
    \begin{equation} \label{eq: Markov_chain}
        \ket{\psi_{ABC}} = \ket{\psi_{B C_1}} \otimes \ket{\psi_{C_2 A}}.
    \end{equation}
    This is equivalent to the condition that the conditional mutual information vanishes~\cite{hayden2004structure} for the tripartite state $\ket{\psi_{ABC}}$,
    \begin{equation}
        I(A:B|C) = I(A:B) = 0.
    \end{equation}
    Let $\mathrm{SEP}_{AB}$ denote the event that the marginal state $\rho_{AB}$ is separable.
    We then have
    \begin{equation} \label{eq: sep_g}
        \mathrm{SEP}_{AB} \cap \{g(A:B) = 0\} = \{I(A:B|C)=0\}.
    \end{equation}
    Therefore, the typicality of the conditional mutual information $I(A:B|C)$ helps to investigate the tripartite entanglement in Haar random state $\ket{\psi_{ABC}}$.
    
    Using Page's formula, the average conditional mutual information of a Haar random state $\ket{\psi_{ABC}}$ is
    \begin{widetext}
        \begin{equation}
            \bar{I}(A:B|C) = \left\{\begin{array}{lc}
            \frac{D_A D_B}{2 D_C} (1 + O(D^{-\alpha})), & n_C > 1/2 \\ 
            2 \min\{N_A,N_B\} \log 2  + O(D^{-\alpha}) , & \max\{n_A,n_B\} > 1/2 \\
            (N - 2 N_C) \log 2 + O(D^{-\alpha}), & \max\{n_A,n_B,n_C\} < 1/2
            \end{array}\right.
        \end{equation}     
    \end{widetext}
    Applying Levy's lemma~\cite{ledoux2001concentration}, we establish the typicality of the conditional mutual information.
    \begin{theorem} \label{theorem: mutual_information}
        Let $\ket{\psi}_{ABC}$ be a Haar random state on $\mathcal{H}_A \otimes \mathcal{H}_B \otimes \mathcal{H}_C$, where $\dim \mathcal{H}_{X} = D_X = d^{N_X}$, for $X = A, B, C$. 
        In the thermodynamic limit $N\rightarrow\infty$, the conditional mutual information $I(A:B|C)$ converges almost surely to its mean value $\bar{I}(A:B|C)$, i.e., for any given $0<\eta < 1$
        \begin{equation} \label{eq: convergence_I}
            \mathbb{P}(\{|I - \bar{I}| \leq \theta_{\epsilon}\eta\} \ f.e.) = 1,
        \end{equation}
        where for any $\epsilon > 0$, with $n_C = {N_C}/{N}$ 
        \begin{equation}
            \theta_{\epsilon} = \left\{\begin{array}{lc}
                \mathrm{const.}, & n_C \leq \frac{1}{2}, \\
                \bar{I}, & \frac{1}{2} < n_C < \frac{3}{4} - (1+\epsilon) \frac{\log N}{2N\log2} \\
                \frac{2 \log D}{\sqrt{D^{1-\epsilon}}}, & n_C > \frac{3}{4} - (1+\epsilon) \frac{\log N}{2N\log2}.
            \end{array}\right.
        \end{equation}
        Moreover, in the region
        \begin{equation}
            n_C < \frac{3}{4} - (1+\epsilon) \frac{\log N}{2N\log2},
        \end{equation}
        the conditional mutual information almost surely is nonzero with only finite exceptions
        \begin{equation} \label{eq: nonzero_I}
            \mathbb{P}[\{I(A:B|C) > 0\}\ f.e.] = 1.
        \end{equation}
    \end{theorem}\noindent
    The proof is provided in Appendix~\ref{app: mutual_information}.
    In the thermodynamic limit, for $n_{C} >1/2$, the average conditional mutual information is positive but exponentially small.
    Thus, Eq.~(\ref{eq: convergence_I}) implies that $I(A:B|C) \rightarrow 0$ almost surely.
    Although it is possible that $\mathbb{P}[I(A:B|C) = 0] > 0$, for $1/2 < n_{C} < 3/4$, Eq.~(\ref{eq: nonzero_I}) implies that the conditional mutual information is almost surely positive yet sufficient small, i.e., $I(A:B|C) \rightarrow 0^+$.
    We refer to this regime as weakly nonzero.

    Here, we provide an intuitive interpretation of the weakly nonzero conditional mutual information, which is central to the proof.
    For the Lipschitz function $I = I(A:B|C)$, Levy's lemma gives,
    \begin{equation}
        \mathbb{P}(|I - m_I|>\epsilon) \leq 2 \exp\left[-\frac{D \epsilon^2}{2 \log^2 D}\right].
    \end{equation}
    From proportion~1.9 of Ref.~\cite{ledoux2001concentration}, the variance satisfies $\mathrm{Var} I \leq (2 \log^2 D)/{D}$, and the relative standard deviation approaches zero in the thermodynamic limit
    \begin{equation}
        {\Delta I}/{\bar{I}} \leq 2\log D\sqrt{\frac{2D_C}{D_A^3D_B^3}} \leq 2\sqrt{2} N^{-\epsilon} \log 2 \rightarrow 0,
    \end{equation}
    provided $n_C < \frac{3}{4} - (1+\epsilon) \frac{\log N}{2N\log2}$.
    Therefore, $I$ concentrates around its average $\bar{I}$ within the interval $[\bar{I} - C \Delta I, \bar{I} + C \Delta I]$, where $C$ is a finite constant.
    Since ${\Delta I}/{\bar{I}} \rightarrow 0$, typically $\bar{I} - C \Delta I>0$ for sufficiently large $N$, and thus $I$ is typically exponentially small, i.e., weakly nonzero.

    Note that the conditional mutual information $I(A:B|C)$ is related to the quantum Markov recovery problem.
    If $I(A:B|C) = 0$, the state 
    \begin{equation}
        \rho_{ABC} = \sum_i p_i \rho_{AB_1} \otimes \rho_{B_2 C}
    \end{equation}
    forms a quantum Markov chain $A\rightarrow B \rightarrow C$, meaning it can be exactly recovered from the marginal state $\rho_{AB}$ by Petz recovery map $\mathcal{R}_{B \rightarrow BC}^0$
    \begin{equation}
        \rho_{ABC} = \mathcal{R}_{B \rightarrow BC}^0(\rho_{AB}).
    \end{equation}
    For a weakly nonzero conditional mutual information $I(A:B|C) \rightarrow 0^+$, the state $\rho_{ABC}$ can be approximately recovered from $\rho_{AB}$ using a (rotated) Petz recovery map $\mathcal{R}_{B \rightarrow BC}$ with high fidelity or small trace distance~\cite{fawzi2015quantum,PhysRevLett.115.050501,doi:10.1098/rspa.2015.0623}, Eq.~(\ref{eq: markov_recovery})
    \begin{equation}
        \rho_{ABC} \approx \mathcal{R}_{B \rightarrow BC}(\rho_{AB}),
    \end{equation}
    forming an approximate quantum Markov chain.
    Such chain can differ significantly from exact quantum Markov chain~\cite{ibinson2008robustness,christandl2012entanglement,erker2015not} and are well approximated by thermal states of short-range Hamiltonians~\cite{kato2019quantum}.
    Since states forming quantum Markov chain, Eq.~(\ref{eq: Markov_chain}), do not have bound entangled marginal state, Theorem~\ref{theorem: mutual_information} implies that in region $1/2<n_C<n_{\mathrm{SEP}}$, states with PPT marginal states typically form approximate quantum Markov chain that are far from being exact.
    This highlights the non-trivial nature of the weakly nonzero regime.
    In the following section, we establish the existence of a weakly nonzero Markov gap, which relates to partial quantum Markov recovery problem and shares similar properties and interpretations.

    In the region $n_{\mathrm{SEP}} < n_C < 3/4$, the marginal state $\rho_{AB}$ is typically separable, $\mathbb{P}[\mathrm{SEP}_{AB}] \rightarrow 1$, and Theorem~\ref{theorem: mutual_information} implies that the conditional mutual information is typically weakly nonzero, Eq.~(\ref{eq: nonzero_I}).
    From Eq.~(\ref{eq: sep_g}),
    \begin{equation}
        \{I(A:B|C)>0\} = \neg\mathrm{SEP}_{AB} \cup \{g(A:B) > 0\} ,
    \end{equation}
    so that 
    \begin{equation}
        \{g(A:B) > 0\} \supset \{I(A:B|C)>0\} \cap \mathrm{SEP}_{AB}.
    \end{equation}
    Therefore, it follows that
    \begin{align}
        \mathbb{P}[g(A:B)>0] & \geq \mathbb{P}[\{I(A:B|C)>0\} \cap \mathrm{SEP}_{AB}] \nonumber\\
        & \rightarrow \mathbb{P}[\mathrm{SEP}_{AB}] \rightarrow 1,
    \end{align}
    which confirms the existence of tripartite entanglement in the region $n_{\mathrm{SEP}} < n_C < 3/4$.

\section{Weakly Nonzero Markov Gap and Entanglement Undistillability} \label{sec: Markov_gap}

    In this section, we investigate the Markov gap of Haar random state.
    The average reflected entropy for the tripartite Haar ensemble is given by (following the case of the single random tensor in Ref.~\cite{akers2022reflected})
    \begin{widetext}
        \begin{equation}
            \bar{S}_R(A:B) = - p_0 \log p_0 - p_1 \log p_1 + p_1 \left(\log D_A^2 - \frac{D_A^2}{2 D_B^2}\right) + O(D^{-2}),
        \end{equation}    
        where $p_0 = 1 - {D_{AB}}/{4D_C}$ for $D_{AB} \ll D_C$, $p_0 = {D_C}/{D_{AB}}$ for $D_{AB} \gg D_C$, and $p_0 + p_1 = 1$.
        Assuming $D_A < D_B$, the leading order of average Markov gap is [see Fig.~\ref{fig: figure3}(b) and~(c) for diagrams]
        \begin{equation}
            \bar{h}(A:B) = \left\{\begin{array}{lc}
            \frac{D_{AB}}{4D_C}\left[\left(N- 2N_B\right)\log 2 + O(1)\right], & n_C > 1/2 \\
            \frac{D_{AC}}{2D_B} \left(1+ O(D_A^{-2})\right), & n_B > 1/2 \\
            \left(N- 2N_B\right)\log2 + O(D^{-\alpha}), & \max\{n_B,n_C\} < 1/2 
        \end{array}\right.
    \end{equation}    
    \end{widetext}

    The average Markov gap is significantly larger than zero in the region $n_{\max}\leq1/2$, a regime we refer to as strongly nonzero. 
    In contrast, it approaches zero in the region $n_{\max}>1/2$.
    There is no doubt that in the strongly nonzero regime, the Markov gap is nonzero, $\mathbb{P}(h>0) > 0$, implying the existence of non-SOTS-type tripartite entanglement.
    To summarize our findings so far\\
    \textit{(a)} for $n_{\mathrm{SEP}}<n_{\max}<3/4$, $\mathbb{P}(g>0) \rightarrow 1$;\\
    \textit{(b)} for $n_{\mathrm{PPT}} = 1/2 < n_{\max} < n_{\mathrm{SEP}}$, $\mathbb{P}(h>0) \rightarrow 1$;\\
    \textit{(c)} for $n_{\max}\leq1/2$, $\mathbb{P}(h>0) > 0$.

    Applying Levy's Lemma~\cite{ledoux2001concentration} again, we establish the typicality of the Markov gap.
    \begin{theorem} \label{theorem: Markov_gap} 
        Let $\ket{\psi}_{ABC}$ be a Haar random state on $\mathcal{H}_A \otimes \mathcal{H}_B \otimes \mathcal{H}_C$, where $\dim \mathcal{H}_{X} = D_X = d^{N_X}$, for $X = A, B, C$. 
        In the thermodynamic limit $N\rightarrow\infty$, the Markov gap $h$ converges almost surely to its mean value $\bar{h}$, i.e., for any given $0<\eta < 1$

        \begin{equation} \label{eq: convergence_h}
            \mathbb{P}(\{|h - \bar{h}| \leq \theta_{\epsilon}\eta\} \ f.e.) = 1,
        \end{equation}
        where for any $\epsilon > 0$, with the proportion $n_C = \frac{N_C}{N}$  of subsystem $C$
        \begin{equation}
            \theta_{\epsilon} = \left\{\begin{array}{lc}
                \mathrm{const.}, & n_C \leq \frac{1}{2}, \\
                \bar{h}, & \frac{1}{2} < n_C < \frac{3}{4} - \frac{\epsilon\log N}{2N\log2} \\
                \frac{2 \log D}{\sqrt{D^{1-\epsilon}}}, & n_C > \frac{3}{4} - \frac{\epsilon\log N}{2N\log2}.
            \end{array}\right.
        \end{equation}
        Moreover, in the region
        \begin{equation}
            n_{\max} < \frac{3}{4} - \frac{\epsilon\log N}{2N\log2},
        \end{equation}
        the Markov gap is almost surely nonzero with only finite exceptions
        \begin{equation}
            \mathbb{P}[\{h > 0\}\ f.e.] = 1.
        \end{equation}
    \end{theorem}\noindent
    The proof is provided in Appendix~\ref{app: concentration}. 
    This result implies that the Markov gap is weakly nonzero $h \rightarrow 0^+$ for $1/2<n_{\max}<3/4$, and approaches zero $h\rightarrow 0$ (not necessarily nonzero) for $n_{\max} > 3/4$.
    When $n_{\max} <1/2$, Markov gap is strongly nonzero $h \rightarrow \infty$
    \begin{equation}
        \left\{\begin{array}{lc}
        \mathbb{P}(\{h\rightarrow \infty\} \ f.e.) = 1, &  n_{\max}< 1/2, \\
        \mathbb{P}(\{h\rightarrow0^{+}\} \ f.e.) = 1, & 1/2 < n_{\max} <3/4,  \\
        \mathbb{P}(\{h\rightarrow0\} \ f.e.) = 1, &  n_{\max} > 3/4.
        \end{array}\right. 
    \end{equation} 
    
    Similar to Theorem~\ref{theorem: mutual_information}, we provide a brief interpretation of the weakly nonzero Markov gap $h = h(A:B)$.
    The concentration inequality for the Markov gap is
    \begin{equation}
        \mathbb{P}(|h - m_h|>\epsilon) \leq 2 \exp\left[-\frac{D \epsilon^2}{2 (1+2n_A)^2\log^2 D}\right].
    \end{equation}
    The relative standard deviation approaches zero in the thermodynamic limit
    \begin{align}
        {\Delta h}/{\bar{h}} & \leq \frac{4(1+2n_A)}{(1-2n_B)}\sqrt{\frac{2D_C}{D_A^3D_B^3}} \nonumber \\ 
        & \leq \frac{4\sqrt{2}(1+2n_A)}{(1-2n_B)} N^{-\epsilon}  \rightarrow 0,
    \end{align}
    provided $n_C < \frac{3}{4} - \frac{\epsilon\log N}{2N\log2}$.
    Therefore, $h$ concentrates around its average $\bar{h}$ with a sufficiently small dispersion, making it weakly nonzero.

    In the region $n_{\max}<3/4$, the probability of any event $A$ satisfies
    \begin{align}
        \mathbb{P}(A) \rightarrow \mathbb{P}(A \cap \{h > 0\}).
    \end{align}
    Consequently, any behavior of the Haar random state in this region originates from the behavior of states with a nonzero Markov gap.
    In particular, the PPT threshold $n_{\mathrm{PPT}} = 1/2$ corresponds to the transition from a strongly nonzero to a weakly nonzero Markov gap, and the separability threshold $n_{\mathrm{SEP}} = (1 + n_{\min})/2$ emerges in the weakly nonzero regime of Markov gap.

    It also follows that with only finite exceptions in the thermodynamic limit, a state with a weakly nonzero Markov gap almost surely has an undistillable (i.e., bound entangled or separable) marginal state 
    \begin{equation}
        \mathbb{P}(\mathrm{BND}_{AB} \cup \mathrm{SEP}_{AB} | \{h\rightarrow0^{+}\} \ f.e.) = 1,
    \end{equation}
    and a state with a strongly nonzero Markov gap almost surely has NPT marginal states
    \begin{equation}
        \mathbb{P}(\mathrm{NPT}_{AB} | \{h \gg 0\} \ f.e.) = 1.
    \end{equation}
    The marginal undistillability of a state $\ket{\psi}_{ABC}$ with a weakly nonzero Markov gap implies that it typically cannot be reduced to a tensor product of states with zero Markov gap and other states with (likely strongly) nonzero Markov gap, since the marginal state $\rho_{AB}$ of a SOTS is not bound entangled.
    As mentioned earlier, the Markov gap is related to the partial quantum Markov recovery problem.
    Thus, a state with a weakly nonzero Markov gap forms a (partial) approximate quantum Markov chain.
    The marginal undistillability of weakly nonzero Markov gap reveals a possible structure of the PPT bound entangled states, which is left for further investigation.

\section{Holographic Interpretation} \label{sec: holography}

    In holographic duality, both the EoP $E_p$ and the reflected entropy $S_R$ are proposed to be dual to the entanglement wedge cross section~\cite{umemoto2018entanglement,hayden2021markov}. 
    It is conjectured that $S_R = 2 E_p$ for holographic state.
    At leading order, a nonzero Markov gap is dual to a discontinuous boundary of the KRT surfaces $\gamma_{A,\mathrm{KRT}}$, which is suggested to correspond to tripartite entanglement~\cite{akers2020entanglement,hayden2021markov}.
    For a holographic state with a connected entanglement wedge $W(A:B)$, each discontinuous boundary of the KRT surfaces $\gamma_{A,\mathrm{KRT}}$ and $\gamma_{B,\mathrm{KRT}}$ contributes a geometric term $\geq \log2/G_N$ to the Markov gap at leading order~\cite{hayden2021markov}
    If the entanglement wedge is disconnected, the KRT surface has no discontinuous boundary, thus the leading geometric contribution to the Markov gap vanishes.

    Although tripartite entanglement is expected to support the discontinuous boundaries of KRT surfaces $\gamma_{A,\mathrm{KRT}}$, the GHZ state is excluded from the classical holographic states~\cite{hayden2021markov}, due to the violation of the monogamy of mutual information~\cite{PhysRevD.87.046003}, which requires that for any three regions $A, B$ and $C$
    \begin{equation}
        I(A:BC) \geq I(A:B) + I(A:C).
    \end{equation}
    More generally, the tripartite entanglement of a SOTS, which has a zero Markov gap, should also be excluded from the classical holographic states.
    \begin{theorem} \label{theorem: monogamy}
        For the sum of triangle state, $h = 0$, 
        \begin{align}
            I(A:BC_1) & = I(A:B) + I(A:C_1) - g \nonumber\\
            &\leq I(A:B) + I(A:C_1).
        \end{align}
    \end{theorem} \noindent
    Obviously, it contradicts the monogamy of mutual information unless $g=0$, where no tripartite entanglement exists.
    Therefore, $h$ and $g$ likely measure the same type of tripartite entanglement contributing to geometric entropies of a holographic state, supporting the conjecture $S_R = 2 E_p$~\cite{dutta2021canonical}.
    Although SOTS-type tripartite entanglement cannot exist in a classical holographic state, the presence of quantum matter in the bulk geometry generally allows for violations of the monogamy of mutual information.
    Thus, SOTS may present in the quantum corrections of the bulk term $S_{\mathrm{bulk}}$ of entropy perturbatively in semiclassical limit.
    \begin{proof}[Proof of Theorem~\ref{theorem: monogamy}]
        Consider the mutual information $I(A:BC_1)$ for a SOTS
        \begin{equation}
            \ket{\psi_{ABC}} = \sum_l \sqrt{p_l} \ket{\psi_{A_1 B_2}^l} \ket{\psi_{B_1 C_2}^l} \ket{\psi_{C_1 A_2}^l}.
        \end{equation}
        As calculated in Proposition~\ref{proposition: SOTS_g}, we have 
        \begin{align}
            S(A) & = S_{A:B} + S_{A:C} + g, \\
            S(C_1) & = S_{A:C} + g, \\
            S(AC_1) & = S_{A:B} + g, \\
            S(BC1) & = S_{A:C} + S_{B:C} + S_{A:B} + g, \\
            S(ABC_1) & = S_{B:C} + g,
        \end{align}
        where $g = H(p_l)$.
        The mutual informations are then
        \begin{align}
            I(A:BC_1) & = S(A) + S(BC1) - S(ABC1) \nonumber \\
            & = 2 S_{A:B} + 2 S_{A:C} + g, \\
            I(A:B) & = S(A) + S(B) - S(AB) \nonumber \\
            &  = 2 S_{A:B} + g , \\
            I(A:C_1) & = S(A) + S(C_1) - S(AC_1) \nonumber\\
            & = 2 S_{A:C} + g.
        \end{align}
        It follows that
        \begin{align}
            I(A:BC_1) & = I(A:B) + I(A:C_1) - g \nonumber\\
            &\leq I(A:B) + I(A:C_1),
        \end{align}
        with $g \geq 0$.
    \end{proof}

    In the AdS/CFT correspondence, the tripartite Haar random state models a three-boundary wormhole~\cite{balasubramanian2014multiboundary,hayden2016holographic,akers2022reflected}, as depicted in Fig.~\ref{fig: figure4}.
    The areas of the wormhole mouths $\mathcal{A}_{A,B,C}$ are related to the subsystem dimensions
    \begin{equation}
        N_{A,B,C} \log2 \equiv \log D_{A,B,C} = \frac{\mathcal{A}_{A,B,C}}{4G_N},
    \end{equation}
    where the perturbative constant $G_N$ is dual to the number of qubits $N \propto 1/G_N$.
    The reflected entropy of Haar random states exhibits an entanglement wedge phase transition from connected to disconnected at $n_{\max} = 1/2$, coinciding with the PPT threshold for Haar random states.
    This phase transition is discontinuous in the leading geometric contribution, as the Markov gap receives a contribution $\geq \log2/G_N$ from each discontinuous boundary of the KRT surface.
    This discontinuity is smoothed out by the non-perturbative effects~\cite{akers2022reflected}.

    In the previous section, we identified two distinct regimes of nonzero Markov gap, strongly and weakly nonzero.
    The strongly nonzero Markov gap, scaling as $\propto N \propto 1/G_N$, corresponds to the tripartite entanglement dual to the discontinuous boundary of the KRT surface at leading order.
    The weakly nonzero Markov gap, scaling as $\propto O(D^{-\alpha}) \propto O(e^{-1/G_N})$ corresponds to non-perturbative effects.
    Consequently, the PPT bound entanglement of the boundary state emerges from the non-perturbative contributions in the bulk geometry.
    In brief, we supposed that tripartite entanglement with a strongly nonzero Markov gap is dual to geometric contribution, tripartite entanglement with a vanishing Markov gap is dual to bulk contribution at perturbative order, and tripartite entanglement with a weakly nonzero Markov gap is dual to non-perturbative effects.

\section{Conclusion} \label{sec: conclusion}

In Haar random states, the separation between the thresholds for separability and PPT implies the existence of bound entanglement.
Since bound entanglement is not a direct resource in quantum computation and communication tasks, understanding its characteristics is crucial. 
Our work demonstrates that bound entangled states possess a nonzero Markov gap, suggesting a fundamental connection between these two concepts.

We have investigated tripartite entanglement in Haar random states using the Markov gap.
Our results show that  Haar random states almost surely exhibit a nonzero Markov gap when $n_{\max}<3/4$.
In particular, when $1/2<n_{\max}<3/4$, the Markov gap is almost surely weakly nonzero.
Therefore, the transition from bound entangled to separable state can be understood as a consequence of the properties inherent to state with a weakly nonzero Markov gap.

The Markov gap is a quantity that was recently introduced in the context of AdS/CFT research.
We have proven that tripartite entangled states with a vanishing Markov gap generally violate the monogamy of mutual information.
This indicates that such states are dual to the bulk entanglement entropy at the perturbative level.
Furthermore, since the weakly nonzero Markov gap corresponds to non-perturbative contributions in holographic duality, we conclude that PPT bound entanglement in the boundary emerges from the non-perturbative effects in bulk geometry.
Our findings illuminate potential pathways for further investigation of Markov gap, and enhance the interdisciplinary application of quantum information.

\begin{acknowledgments}
H. F. acknowledges support from the National Natural Science Foundation of China (Grants No.~T2121001, No.~92265207,  No.~92365301), the Innovation Program for Quantum Science and Technology (Grant No.~2021ZD0301800).
S. L. acknowledges support from the Gordon and Betty Moore Foundation under Grant No.~GBMF8690, the National Science Foundation under Grant No.~NSF PHY-1748958, and the Simons Foundation under an award to Xie Chen (Award No.~828078).
\end{acknowledgments}

\appendix

\section{Additional Introductions} \label{app: additional}
In this appendix, we introduce some additional concepts and definitions used in the main text.
\subsection{Adverbs in probability theory} 

In this Appendix, we introduce the precise definitions of some adverbs used in main text.
For more details, see Ref.~\cite{shiryaev2016probability}.
Let $A$ denote event in sample space $\Omega$, the opposite event is $\bar{A}=\Omega\setminus A$.

\begin{definition}
    For a sequence of events $A_n, (n =1, 2, \dots \infty)$, the events occur \textit{in probability} if 
    \begin{equation}
        \mathbb{P}(A_n) \rightarrow 1.
    \end{equation}
\end{definition}\noindent
Typically, a sequence of random variable $\xi_n$ converges to $\xi$ in probability $\xi_n \overset{\mathbb{P}}{\rightarrow} \xi$ if for any $\epsilon>0$,
\begin{equation}
    \mathbb{P}(|\xi_n - \xi|>\epsilon) \rightarrow 0.
\end{equation}

\begin{definition}
    Event $A$ is \textit{almost surely} with respect to probability measure $\mathbb{P}$ if 
    \begin{equation}
        \mathbb{P}(\bar{A}) = 0.
    \end{equation}
\end{definition}

\begin{definition}
    For a sequence of events $A_n, (n =1, 2, \dots \infty)$, the event that infinitely many of events $A_k$ occur is 
    \begin{equation}
        A_n \ i.o. \equiv \lim\sup A_n \equiv \bigcap_{n=1}^{\infty} \bigcup_{k\geq n} A_k,
    \end{equation} 
    where $i.o.$ is the abbreviation of \textit{infinitely often}.
\end{definition}

\begin{definition}
    For a sequence of events $A_n, (n =1, 2, \dots \infty)$, the event that events $A_k$ occur with only finite exceptions is 
    \begin{equation}
        A_n \ f.e. \equiv \lim\inf A_n \equiv \bigcup_{n=1}^{\infty} \bigcap_{k\geq n} A_k,
    \end{equation} 
    where $f.e.$ is the abbreviation of \textit{finite exceptions}.
\end{definition}\noindent
Obviously, $\lim\inf A_n \subseteq \lim\sup A_n$.
Moreover, with De Morgan's laws, we have 
\begin{align}
    \overline{\lim\inf A_n} = \lim\sup \bar{A}_n, \\
    \overline{\lim\sup A_n} = \lim\inf \bar{A}_n. 
\end{align}
In addition, it also has
\begin{align} \label{eq: event_compose}
    \lim \sup (A_n \cup B_n) & = \lim \sup A_n \cup \lim \sup B_n, \\
    \lim \inf (A_n \cap B_n) & = \lim \inf A_n \cap \lim \inf B_n.
\end{align}

For a sequence of events $A_n, (n =1, 2, \dots \infty)$, the events occur almost surely with only finite exceptions means that
\begin{equation}
    \mathbb{P}(\lim\inf A_n) = 1.
\end{equation}
In particular, a sequence of random variable $\xi_n$ converge to $\xi$ means for any $\epsilon$, 
\begin{equation}
    |\xi_n - \xi| < \epsilon 
\end{equation}
with only finite exceptions.
Therefore, the sequence of random variable $\xi_n$ converge to $\xi$ almost surely $\xi_n \overset{a.s.}{\rightarrow} \xi$ if for any $\epsilon>0$,
\begin{equation}
    \mathbb{P}(|\xi_n - \xi|>\epsilon \ i.o.) = 0.
\end{equation}
It is also called as the convergence \emph{with probability one}.

\begin{lemma} \label{lemma: f.e.}
    The sequence of event $A_n$ almost surely occurs with only finite exceptions
    \begin{equation}
        \mathbb{P}(\lim\inf A_n) = 1,
    \end{equation}
    if and only if events $\sup \bar{A}_n \equiv \bigcup_{k\geq n} \bar{A}_k$ is not occur in probability 
    \begin{equation}
        \mathbb{P}(\sup \bar{A}_n)  \rightarrow 0.
    \end{equation}
    Moreover, a sufficient condition is that the series converges
    \begin{equation}
        \sum_{n=1}^{\infty} \mathbb{P}(\bar{A}_n) < \infty.
    \end{equation}
\end{lemma}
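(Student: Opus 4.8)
The plan is to reduce everything to the monotone continuity of the probability measure together with the De Morgan identities already recorded above. First I would pass to complements: by $\overline{\lim\inf A_n} = \lim\sup \bar A_n$, the statement $\mathbb{P}(\lim\inf A_n) = 1$ is equivalent to $\mathbb{P}(\lim\sup \bar A_n) = 0$. Writing $B_n \equiv \sup \bar A_n = \bigcup_{k\geq n}\bar A_k$, I would observe that $(B_n)$ is a non-increasing sequence of events with $\lim\sup \bar A_n = \bigcap_{n=1}^{\infty} B_n$.

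Next I would invoke continuity of $\mathbb{P}$ from above: since $\mathbb{P}(B_1)\le 1 < \infty$, we have $\mathbb{P}\bigl(\bigcap_n B_n\bigr) = \lim_{n\to\infty}\mathbb{P}(B_n)$. Because $\mathbb{P}(B_n)$ is itself non-increasing and bounded below by $0$, its limit exists and equals $\mathbb{P}(\lim\sup \bar A_n)$. Hence $\mathbb{P}(\lim\sup \bar A_n)=0$ if and only if $\mathbb{P}(B_n)\to 0$, i.e.\ if and only if $\mathbb{P}(\sup\bar A_n)\to 0$. Chaining this with the first step gives the claimed equivalence.

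For the sufficient condition I would use countable subadditivity: $\mathbb{P}(B_n) = \mathbb{P}\bigl(\bigcup_{k\geq n}\bar A_k\bigr) \le \sum_{k\geq n}\mathbb{P}(\bar A_k)$. If $\sum_{n=1}^{\infty} \mathbb{P}(\bar A_n) < \infty$, the right-hand side is the tail of a convergent series and therefore tends to $0$ as $n\to\infty$, so $\mathbb{P}(B_n)\to 0$; the equivalence just proved then yields $\mathbb{P}(\lim\inf A_n)=1$. This last implication is precisely the first Borel--Cantelli lemma.

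There is no serious obstacle here; the only point worth being careful about is that the continuity-from-above step genuinely requires finiteness of $\mathbb{P}(B_1)$, which is automatic for a probability measure, so the whole argument stays within elementary measure theory.
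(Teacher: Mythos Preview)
Your proof is correct and follows essentially the same route as the paper's: both pass to complements via De Morgan, use continuity from above of $\mathbb{P}$ on the decreasing sequence $B_n=\bigcup_{k\ge n}\bar A_k$ to obtain the equivalence, and then invoke countable subadditivity on the tail sum for the sufficient condition. Your version is slightly more explicit in flagging the finiteness hypothesis for continuity from above and in identifying the last step as the first Borel--Cantelli lemma, but there is no substantive difference.
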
    
\begin{proof}
    The condition that events $A_n$ almost surely occurs with only finite exceptions
    \begin{equation}
        \mathbb{P}(\lim\inf A_n) = 1 
    \end{equation}
    is equivalent to 
    \begin{equation}
        \mathbb{P}(\lim\sup \bar{A}_n) = 0.
    \end{equation}
    Since 
    \begin{align}
        \mathbb{P}(\lim\sup \bar{A}_n) & \equiv \mathbb{P}\left(\bigcap_{n=1}^{\infty} \bigcup_{k\geq n} \bar{A}_k\right) \nonumber \\
        & = \lim_{n\rightarrow\infty} \mathbb{P}\left(\bigcup_{k\geq n} \bar{A}_k\right) \nonumber \\
        & \equiv \lim_{n\rightarrow\infty} \mathbb{P}(\sup \bar{A}_n),
    \end{align}
    it is equivalent to 
    \begin{equation}
        \mathbb{P}(\sup \bar{A}_n)  \rightarrow 0.
    \end{equation}

    Moreover, when the series converges
    \begin{equation}
        S = \sum_{n=1}^{\infty} \mathbb{P}(\bar{A}_n) < \infty,
    \end{equation}
    the partial sum $S_N = \sum_{n=1}^{N} \mathbb{P}(\bar{A}_n)$ converges to $S$
    \begin{equation}
        \left|\sum_{k\geq n} \mathbb{P}(\bar{A}_k)\right| = |S-S_n| \rightarrow 0.
    \end{equation}
    Since 
    \begin{equation}
        \mathbb{P}(\sup \bar{A}_n) \equiv \mathbb{P}\left(\bigcup_{k\geq n} \bar{A}_k\right) \leq \sum_{k\geq n} \mathbb{P}(\bar{A}_k),
    \end{equation}
    it follows that 
    \begin{equation}
        \mathbb{P}(\lim\inf A_n) = 1.
    \end{equation}
\end{proof}

\subsection{Multipartite entanglement} 

    A $k$-partite state $\rho$ is entanglement if it is not separable, i.e. it cannot be decomposed as
    \begin{equation}
        \rho = \sum_i p_i \rho_1^i \otimes \dots\otimes \rho_k^i.
    \end{equation}
    The multipartite entanglement is more complicated than bipartite entanglement.
    For example, there is a special type of multipartite entangled state called semiseparable state~\cite{PhysRevLett.82.5385}, where the $k$-partite state is separable with respect to any $m$-partition such that $m\leq k-1$, while it is entangled for the $k$-partition.

    Moreover, a bipartite pure state always has the Schmidt decomposition
    \begin{equation}
        \ket{\psi}_{AB} = \sum_i \sqrt{p_i} \ket{i_A} \ket{i_B},
    \end{equation}
    where the Schmidt coefficients $p_i$ are the eigenvalues of its marginal states
    \begin{equation}
        \rho_A \equiv \mathrm{Tr}_B\ket{\psi}\bra{\psi}_{AB}, \quad \rho_B \equiv \mathrm{Tr}_A\ket{\psi}\bra{\psi}_{AB},
    \end{equation} 
    and $\ket{i_A}$ and $\ket{i_B}$ are orthonormal basis of system $A$ and $B$, respectively.
    The Schmidt coefficients are invariant under local unitary operations, and uniquely determine the bipartite entanglement.
    However, a multipartite pure state does not always have the general Schmidt decomposition $\sum_i c_i \ket{i_1} \otimes \dots \otimes \ket{i_k}$~\cite{PERES199516,PhysRevA.59.3336}, such as the W state~\cite{PhysRevA.62.062314}
    \begin{equation}
        \ket{W} = (\ket{100} + \ket{010} + \ket{001})/\sqrt{3}.
    \end{equation}

    For the multipartite state admitting Schmidt decomposition, GHZ state 
    \begin{equation}
        \ket{GHZ} = \sum_{i=1}^d \ket{i_1} \otimes \dots \otimes \ket{i_k}/\sqrt{d}
    \end{equation}
    seems to be maximally entangled in the sense of entanglement distillation~\cite{GISIN19981}.
    However, for a general multipartite state, a set of inequivalent maximally entangled states is required~\cite{PhysRevA.62.062314,PhysRevLett.111.110502}.

\subsection{Haar measure and Haar induced measure} 

    The Haar measure can be characterized by its $n$-fold moments
    \begin{equation}
        \Phi_{\mathrm{Haar}}^{(n)} (\cdot) \equiv \mathbb{E}_{\hat{U}}[\hat{U}^{\otimes n} (\cdot) \hat{U}^{\dagger \otimes n}] 
    \end{equation}
    which can be evaluated by the Weingarten calculus~\cite{10.1155/S107379280320917X,collins2006integration}
    \begin{equation} \label{eq: Haar}
        \Phi_{\mathrm{Haar}}{(k)} (\cdot)= \sum_{\sigma, \tau \in S_n} \mathrm{Wg}(\tau\sigma^{-1}) \hat{\Pi}_{\tau} \mathrm{Tr}(\hat{\Pi}_{\sigma} \cdot) ,
    \end{equation}
    where $\mathrm{Wg}(\tau \sigma^{-1})$ is the Weingarten function~\cite{collins2006integration}, and $\hat{\Pi}_{\tau}$ is the representation of permutation $\tau \in S_n$ on $\mathcal{H}^{\otimes n}$.
    In the thermodynamic limit, the Weingarten function is
    \begin{equation}
        \mathrm{Wg}(\tau\sigma^{-1}) = D^{-d(\sigma,\tau)} \left[\mathrm{Mob}(\tau\sigma^{-1}) + O\left(D^{-2}\right)\right],
    \end{equation}
    where $d(\sigma,\tau) = n - \#(\sigma^{-1}\tau)$ defines an integer-valued distance on the permutation group $S_{n}$, $\#(\tau)$ is the number of cycles in the permutation $\tau$, and $\mathrm{Mob}(\sigma)$ is the M\"obius function of the permutation $\sigma$ ~\cite{collins2016random,nica2006lectures}.

    For a random pure state $\ket{\psi}$ distributed in Haar measure, if the system is divided into system $A$ and $B$, whose Hilbert space is $\mathcal{H}_A \otimes \mathcal{H}_B$ with dimensions $D_A$ and $D_B$ respectively, 
    \begin{equation}
        \ket{\psi} = \sum_{i_A, j_B} X_{i_A, i_B} \ket{i_A} \ket{j_B},
    \end{equation}
    where $\ket{i_A} \ket{j_B}$ is the orthogonal basis of the Hilbert space $\mathcal{H}_A \otimes \mathcal{H}_B$.
    Its marginal state of system $A$ is
    \begin{equation}
        \rho_{A} = \mathrm{Tr}_B(\ket{\psi}\bra{\psi}) = X X^{\dagger}.
    \end{equation}
    This state $\rho_{AB} \sim \nu_{D_{AB}, D_C}$ is distributed in an induced measure $\nu_{D_{A}, D_B}$ of the Haar measure, which is called the Haar random induced state. 
    For the case $D_B \geq D_A$, it is shown that~\cite{zyczkowski2001induced}
    \begin{equation}
        \frac{\mathrm{d} \mu_{D_A,D_B}}{\mathrm{d} m} = \frac{(\det \rho)^{D_B - D_A}}{Z_{D_A, D_B}},
    \end{equation}
    where $m$ is the Lebesgue measure of the set $\mathcal{D}(\mathcal{H}_A)$ of all quantum states on $\mathcal{H}_A$, $Z_{D_A, D_B}$ is normalization constant.
    This state can also be described by the Wishart ensemble in thermodynamic limit~\cite{forrester2010log}, where the matrix elements $X_{ij}$ satisfy the Gaussian distribution.    
    The Haar induced measure $\nu_{D_{A}, D_B}$ is the product of independent distributions of its eigenstates and eigenvalues~\cite{collins2016random,zyczkowski2001induced}.
    To be more precise, it is 
    \begin{equation}
        \rho_{A} = \hat{U} \hat{D} \hat{U}^{\dagger},
    \end{equation}
    where $\hat{U}$ is random unitary under Haar measure on $\mathcal{H}_A$, and $\hat{D}$ is a diagonal matrix, whose empirical spectrum measure 
    \begin{equation}
        \mu_{D} = \frac{1}{D_{A}} \sum_{i = 1}^{D_{A}} \delta_{\lambda_i},
    \end{equation}
    almost surely converges weakly to M\v{a}rcenko-Pastur (MP) law $\mu$ in thermodynamic limit~\cite{collins2016random,marchenko1967distribution,nadal2011statistical,PhysRevA.85.062331}.
    \begin{align}
        \mathrm{d}\mu & = \frac{\sqrt{4 c \tau^2 - (\lambda - c \tau - \tau)^2}}{2 \pi \tau \lambda} I_{(\lambda_-, \lambda_+ )}(\lambda) \mathrm{d}\lambda \nonumber\\
        &~~~~ + \max(1 - c, 0) \delta(\lambda),
    \end{align}
    where $\tau = \frac{1}{D_B}$, $c = \frac{D_B}{D_A}$, and $\lambda_{\pm} =  \tau(1\pm\sqrt{c})^2$.

\section{Proofs for Existence of Tripartite Entanglement in Haar Ensemble} \label{app: mutual_information}

To prove Theorem~\ref{theorem: mutual_information}, we use the Levy's Lemma~\cite{ledoux2001concentration}.
\begin{lemma}[Levy's Lemma]
    If $f: \mathbb{S}^{D-1} \rightarrow \mathbb{R}$ is an $L$-Lipschitz function, then for every $\epsilon > 0$,
    \begin{equation}
        \mathbb{P}(|f - m_f|>\epsilon) \leq 2 \alpha(\epsilon/L),
    \end{equation}
    where $\alpha(r) = e^{-2 D r^2}$, and $m_f$ is the median of $f$ for $\mathbb{P}$, such that 
    \begin{equation}
        \mathbb{P}(f\leq m_f) \geq 1/2, \quad\textrm{and}\quad \mathbb{P}(f\geq m_f) \geq 1/2.
    \end{equation}
\end{lemma}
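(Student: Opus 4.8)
The plan is to deduce the two-sided deviation bound from a one-sided \emph{measure-enlargement} inequality on the sphere, which itself rests on the L\'evy spherical isoperimetric inequality. Throughout, $\mathbb{P}$ is the uniform (Haar) measure on $\mathbb{S}^{D-1}$, $d(\cdot,\cdot)$ is the geodesic distance, and for a set $A$ I write $A_r = \{x : d(x,A)\leq r\}$ for its $r$-neighborhood. The first step is to center at the median. Define the two median half-spaces
\begin{equation}
    A^- = \{x : f(x)\leq m_f\}, \qquad A^+ = \{x : f(x)\geq m_f\},
\end{equation}
so that by the defining property of $m_f$ both satisfy $\mathbb{P}(A^{\pm})\geq 1/2$. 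The $L$-Lipschitz hypothesis gives, for any $x\in A^-_r$, a witness $y\in A^-$ with $d(x,y)\leq r$, whence $f(x)\leq f(y)+L\,d(x,y)\leq m_f+Lr$; thus $A^-_r\subseteq\{f\leq m_f+Lr\}$, and symmetrically $A^+_r\subseteq\{f\geq m_f-Lr\}$. Choosing $r=\epsilon/L$ and applying a union bound, the claim reduces to the single enlargement estimate $\mathbb{P}(A_r)\geq 1-\alpha(r)$ for any $A$ with $\mathbb{P}(A)\geq 1/2$, since this yields $\mathbb{P}(f> m_f+\epsilon)\leq\alpha(\epsilon/L)$ and $\mathbb{P}(f< m_f-\epsilon)\leq\alpha(\epsilon/L)$, hence $\mathbb{P}(|f-m_f|>\epsilon)\leq 2\alpha(\epsilon/L)$.

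The core of the argument is the enlargement inequality. Here I would invoke the L\'evy isoperimetric inequality on $\mathbb{S}^{D-1}$: among all measurable sets of a fixed measure, the geodesic cap maximizes $\mathbb{P}(A_r^c)$. Comparing $A$ (with $\mathbb{P}(A)\geq 1/2$) to the hemisphere $C$ of equal measure gives $\mathbb{P}(A_r^c)\leq\mathbb{P}(C_r^c)$, reducing everything to the explicit hemisphere case. Parametrizing by the polar angle $\theta$ from the pole of $C$, one has $C=\{\theta\leq\pi/2\}$, $C_r=\{\theta\leq\pi/2+r\}$, and
\begin{equation}
    \mathbb{P}(C_r^c) = \frac{\int_{\pi/2+r}^{\pi}\sin^{D-2}\theta\,d\theta}{\int_0^{\pi}\sin^{D-2}\theta\,d\theta}.
\end{equation}
Substituting $\phi=\theta-\pi/2$ turns the numerator into $\int_r^{\pi/2}\cos^{D-2}\phi\,d\phi$, and bounding $\cos^{D-2}\phi\leq e^{-(D-2)\phi^2/2}$ against a constant lower bound for the denominator produces Gaussian decay of the form $e^{-\Theta(Dr^2)}$, which is the asserted bound $\alpha(r)=e^{-2Dr^2}$ up to the coefficient in the exponent.

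I expect the isoperimetric inequality to be the main obstacle: it is the sole nontrivial geometric input and I would cite it rather than reprove it, since its proof proceeds by two-point spherical symmetrization (rearrangement toward a cap). A secondary, purely computational difficulty is matching the precise exponent $2D$ in $\alpha(r)=e^{-2Dr^2}$. The naive geodesic estimate above gives an exponent of order $(D-2)r^2/2$, so reaching the stated coefficient requires fixing the metric convention (chordal versus geodesic distance, which rescales $L$) together with a sharper estimate of the ratio of the incomplete to complete Beta-type integrals; I would carry this estimate out only to leading order in $1/D$, which is all that the subsequent applications of the lemma require.
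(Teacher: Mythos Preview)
The paper does not prove this lemma at all: it is stated as a known result and attributed to Ledoux's monograph on concentration of measure, and is then invoked as a black box in the proofs of Theorems~\ref{theorem: mutual_information} and~\ref{theorem: Markov_gap}. So there is no ``paper's own proof'' to compare against.

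Your sketch is the standard derivation of L\'evy's lemma and is correct in outline: reduce the two-sided deviation to two one-sided inequalities via the median half-spaces $A^{\pm}$, use the Lipschitz condition to pass from $A^{\pm}_r$ to sublevel/superlevel sets, invoke the spherical isoperimetric inequality to reduce to a hemisphere, and then estimate the cap measure by a Gaussian integral. Citing the isoperimetric step rather than reproving it is entirely appropriate here. Your caveat about the constant is also well placed: the naive geodesic computation gives an exponent of order $(D-2)r^2/2$, and the literature contains several versions (e.g.\ $e^{-(D-1)r^2/2}$, $e^{-Dr^2/2}$, or with a prefactor $\sqrt{\pi/8}$) depending on the metric normalization and how sharply one bounds the beta ratio. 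The paper's stated form $\alpha(r)=e^{-2Dr^2}$ is one such variant; since every subsequent application in the paper only uses that the bound is of the shape $c\exp(-C D\epsilon^2/L^2)$ for some positive constants, the precise value of the exponent is immaterial, and your remark that leading-order control in $1/D$ suffices is exactly right.
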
 \noindent
Here, the median $m_f$ can be substituted by other central value like the mean value without change the concentration behaviors of measure.
We first show that the mutual information is a Lipschitz function.
\begin{lemma}
    For state $\ket{\psi} \in \mathcal{H}_A \otimes \mathcal{H}_B \otimes \mathcal{H}_C$, the entropy $S_A$ is a $2 \log D_A$-Lipschitz function, and consequently the mutual information $I(A:B)$ is a $2 \log D$-Lipschitz function.
\end{lemma}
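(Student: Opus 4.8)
\emph{Reduction.} The plan is to prove the bound for a single von Neumann entropy $S_A$ and then deduce the one for $I(A:B)$. Since $\ket{\psi}$ is pure on $ABC$, one has $S_{AB}=S_C$, so $I(A:B)=S_A+S_B-S_C$; granting that each $S_X$ ($X=A,B,C$) is $2\log D_X$-Lipschitz on the sphere (and that $-S_C$ has the same Lipschitz constant as $S_C$), the triangle inequality for Lipschitz seminorms gives that $I(A:B)$ is $(2\log D_A+2\log D_B+2\log D_C)$-Lipschitz, i.e.\ $2\log D$-Lipschitz with $D=D_AD_BD_C$ the ambient dimension. Using $S_{AB}$ in place of $S_C$ would instead give the weaker constant $4\log D_{AB}$, so invoking purity here is essential. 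One remark before starting: the textbook continuity estimate, Fannes--Audenaert, bounds $|S(\rho)-S(\sigma)|$ by $\tfrac12\|\rho-\sigma\|_1\log(D_A-1)+H_2\!\big(\tfrac12\|\rho-\sigma\|_1\big)$ with $H_2$ the binary entropy, which is \emph{not} Lipschitz since $H_2(T)/T\to\infty$ as $T\to0$; so I would control $S_A$ through its gradient on the sphere instead.

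\emph{The gradient estimate.} Put $\rho_A:=\mathrm{Tr}_{BC}\ket{\psi}\bra{\psi}$ (using $S_A=S_{BC}$ we may assume $D_{BC}\ge D_A$). At a point where $\rho_A$ is nonsingular, differentiate $S_A$ along the great circle $t\mapsto\cos t\,\ket{\psi}+\sin t\,\ket{v}$, with $\ket{v}$ a unit tangent vector, $\langle\psi|v\rangle=0$: using $\tfrac{d}{dt}\mathrm{Tr}\,\varphi(\rho_A)=\mathrm{Tr}\big(\varphi'(\rho_A)\dot\rho_A\big)$ with $\varphi(x)=-x\log x$ and $\mathrm{Tr}\,\dot\rho_A=0$,
\begin{equation}
    \left.\tfrac{d}{dt}S_A\right|_{t=0}=-\mathrm{Tr}\big(\dot\rho_A\log\rho_A\big)=-2\,\mathrm{Re}\,\langle v|(\log\rho_A\otimes I_{BC})|\psi\rangle .
\end{equation}
Cauchy--Schwarz gives $\big|\tfrac{d}{dt}S_A\big|\le 2\,\big\|(\log\rho_A\otimes I_{BC})\ket{\psi}\big\|$, and since $\ket{\psi}\in\mathrm{supp}(\rho_A)\otimes\mathcal H_{BC}$ a Schmidt decomposition yields $\big\|(\log\rho_A\otimes I_{BC})\ket{\psi}\big\|^2=\mathrm{Tr}\big(\rho_A(\log\rho_A)^2\big)=\sum_{i:\lambda_i>0}\lambda_i(\log\lambda_i)^2$, a finite quantity (the $\lambda_i=0$ contributions drop out). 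Hence the gradient of $S_A$ on the sphere has norm at most $2\sqrt{\mathrm{Tr}(\rho_A(\log\rho_A)^2)}$, and everything reduces to the scalar inequality $\mathrm{Tr}\big(\rho(\log\rho)^2\big)\le(\log D_A)^2$ for a $D_A$-dimensional state $\rho$. I would prove this by Lagrange multipliers on the probability simplex: the stationarity condition $(\log\lambda)^2+2\log\lambda=\mu$ is quadratic in $\log\lambda$, so a critical spectrum takes at most two distinct values; a second-order expansion of the two-valued configurations about the uniform spectrum, together with an induction on $D_A$ for the faces where eigenvalues vanish, identifies $\rho=I/D_A$ as the maximiser with value $(\log D_A)^2$. (For $D_A=2$ the true maximum is marginally larger, forcing a slightly larger constant, which is immaterial in the thermodynamic limit.)

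\emph{Globalization and the main obstacle.} To pass from the pointwise bound to Lipschitz continuity, note that $\ket{\psi}\mapsto S_A$ is continuous on the whole sphere and smooth with $\|\nabla S_A\|\le 2\log D_A$ on the open dense set $\{\det\rho_A\neq0\}$, whose complement is a proper real-algebraic subvariety, hence nowhere dense of measure zero. For two points at which $\rho_A$ is nonsingular, join them by a rectifiable path lying in the good set away from its endpoints and of length arbitrarily close to their geodesic distance (possible since the bad set is nowhere dense), integrate $\|\nabla S_A\|$ along it, and let the length shrink to the distance; continuity of $S_A$ and density of the good set then extend the estimate to all pairs. This yields $2\log D_A$-Lipschitzness in the geodesic metric on the sphere, which is what the version of Levy's Lemma quoted above consumes, the chordal and geodesic metrics being bi-Lipschitz equivalent so that the concentration conclusions are unaffected. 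The step I expect to demand the most care is the extremal inequality $\mathrm{Tr}(\rho(\log\rho)^2)\le(\log D_A)^2$, together with the bookkeeping around rank-deficient $\rho_A$; the remainder is routine.
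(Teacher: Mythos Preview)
Your approach is essentially the same as the paper's: compute the gradient of $S_A$ on the sphere to obtain $\lVert\nabla S_A\rVert = 2\sqrt{\mathrm{Tr}(\rho_A\log^2\rho_A)}$, bound this by $2\log D_A$, and then pass to $I(A:B)$ by the triangle inequality for Lipschitz seminorms. You are in fact more careful than the paper, which simply asserts $\mathrm{Tr}(\rho_A\log^2\rho_A)\le(\log D_A)^2$ without the Lagrange-multiplier justification you outline (and, as you correctly flag, this scalar inequality fails marginally at $D_A=2$) and does not address rank deficiency or the passage from a pointwise gradient bound to a global Lipschitz estimate.
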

\begin{proof}
    To show a function is $L$-Lipschitz is equivalent to show that its gradient is bounded $|\nabla f| \leq L$.
    The differential of entropy is 
    \begin{align}
        \mathrm{d} S_A & = - \mathrm{Tr}_A[\mathrm{d}\rho_A \log \rho_A] \nonumber\\
        & = -\mathrm{Tr}[\mathrm{d}(\ket{\psi}\bra{\psi}) \log \rho_A \otimes I_{\bar{A}}] \nonumber \\
        & = -2 \Re \Braket{\psi|\log \rho_A \otimes I_{\bar{A}}|\mathrm{d}\psi}.
    \end{align}
    Therefore, the gradient of entropy about the state $\ket{\psi}$ is 
    \begin{equation}
        \nabla_{\psi} S_A = - 2 (\log \rho_A \otimes I_{\bar{A}})\ket{\psi},
    \end{equation}
    whose norm is 
    \begin{align}
        \Vert \nabla_{\psi} S_A\Vert_2 & = 2 \sqrt{\Braket{\psi|\log^2 \rho_A \otimes I_{\bar{A}}|\psi}} \nonumber\\
        & = 2 \sqrt{\mathrm{Tr}_A[\rho_A \log^2 \rho_A]} \leq 2 \log D_A.
    \end{align}
    Since the mutual information $I(A:B) = S_A + S_B - S_C$, its gradient  
    \begin{align}
        \Vert \nabla_{\psi} I(A:B)\Vert_2 & \leq \Vert \nabla_{\psi} S_A\Vert_2 + \Vert \nabla_{\psi} S_B\Vert_2 + \Vert \nabla_{\psi} S_C\Vert_2 \nonumber \\
        & = 2 \log D.
    \end{align}
\end{proof}

\begin{proof}[Proof of Theorem~\ref{theorem: mutual_information}]
    With Levy's Lemma, we have 
    \begin{equation}
        \mathbb{P}(|I - m_I|>\eta) \leq 2 \exp\left[-\frac{D \eta^2}{2 \log^2 D}\right],
    \end{equation}
    where $I \equiv I(A:B|C)$ is the conditional mutual information.
    If $n_C \leq \frac{1}{2}$, the average conditional mutual information
    \begin{equation}
        \bar{I}(A:B|C) \propto N + O(D^{-\alpha}) \rightarrow \infty,
    \end{equation}
    thus there is no difficulty to prove the theorem with Levy's Lemma and Lemma~\ref{lemma: f.e.}.
    Then, we focus on the region $n_C \geq 1/2$.
    By Proposition~1.9 in~\cite{ledoux2001concentration}, we have 
    \begin{equation}
        |\bar{I} - m_I| \leq \log D\sqrt{\frac{2\pi}{D}}.
    \end{equation}
    In Haar ensemble, for $n_C >1/2$, the average mutual information $\bar{I} = \frac{D_A D_B}{2 D_C} (1 + O(D^{-\alpha}))$, thus if $n_C \leq \frac{3}{4} - (1+\epsilon) \frac{\log N}{2N\log 2}$, then
    \begin{align}
        \eta_0 & \equiv |1 - m_I/\bar{I}| \leq 2\log D\sqrt{\frac{2\pi D_C}{D_A^3D_B^3}} (1 + O(D^{-\alpha})) \nonumber \\
        & \leq 2 \sqrt{2\pi} N^{-\epsilon} \log 2 (1 + O(D^{-\alpha})) \rightarrow 0,
    \end{align}
    in the thermodynamic limit $N \rightarrow \infty$.
    This means that $m_I \geq \bar{I}(1-\eta_0) \sim (1-\eta_0)\frac{D_A D_B}{2 D_C}$.
    Since $\eta_0 \rightarrow 0$, for any given $\eta <1$, there is $N_0$ such that for any $N \geq N_0$, $\eta_0<\eta$. 
    \begin{align} \label{eq: concentration_I_bar}
        & \mathbb{P}(|I - \bar{I}| \geq \bar{I}\eta)  \leq \mathbb{P}[|I - m_I| \geq \bar{I}(\eta - \eta_0)] \nonumber\\
        & \leq 2 \exp\left[- \frac{(\eta - \eta_0)^2 D_A^3 D_B^3 }{8 D_C \log^2 D} (1 + O(D^{-\alpha}))\right] \nonumber \\
        & \leq 2 \exp\left[- \frac{(\eta - \eta_0)^2 N^{2\epsilon}}{8\log^2 2} (1 + O(D^{-\alpha}))\right] \rightarrow 0.
    \end{align}

    Since $\frac{(\eta - \eta_0)^2}{8\log^2 2} (1 + O(D^{-\alpha})) \rightarrow \frac{\eta^2}{8\log^2 2} > 0$, for any $\delta>0$, there is a sufficiently large $N_0$, such that $0 < C_{\delta} = \frac{\eta^2(1- \delta)}{8\log^2 2}  < \frac{(\eta - \eta_0)^2}{8\log^2 2} (1 + O(D^{-\alpha}))$.
    By Lemma~\ref{lemma: f.e.}, since 
    \begin{align}
        & \sum_{N=N_0}^{\infty}\mathbb{P}(|I - \bar{I}| \geq \bar{I}\eta)  \leq \sum_{N=N_0}^{\infty} 2 \exp\left[- C_{\delta} N^{2\epsilon}\right] \nonumber \\
        & \leq 2 \int_0^{\infty} \mathrm{d}x e^{-C_{\delta} x^{2\epsilon}} = \frac{\Gamma\left(\frac{1}{2 \epsilon}\right)}{\epsilon C_{\delta}},
    \end{align} 
    it follows that for any $\eta < 1$, with only finite exceptions, $|I_N - \bar{I}| \leq \bar{I}\eta$ 
    \begin{equation} \label{eq: eta_I}
        \mathbb{P}(\{|I - \bar{I}| \leq \bar{I}\eta\} \ f.e.) = 1.
    \end{equation}
    Notice that $\{|I - \bar{I}| \leq \bar{I}\eta\} \subset \{I > \bar{I}(1-\eta) > 0 \}$, we have
    \begin{equation}
        \mathbb{P}[\{I > 0\}\ f.e.] \geq \mathbb{P}(\{|I - \bar{I}| \leq \bar{I}\eta\} \ f.e.) = 1.
    \end{equation}
    Moreover, notice that $I \rightarrow \bar{I}$ if $|I - \bar{I}| \leq \bar{I}\eta$ for any $\eta > 0$, with only finite exceptions, Eq.~(\ref{eq: eta_I}) implies that $I \rightarrow \bar{I}$ almost surely
    \begin{equation}
        \mathbb{P}(I \rightarrow \bar{I}) = 1.
    \end{equation}

    If $n_C \geq 3/4 - (1+\epsilon) \frac{\log N}{2N\log 2}$, $|\bar{I} - m_I|/\bar{I} \rightarrow \infty$, thus for any $\eta < 1$, $\mathbb{P}(|I - \bar{I}| \geq \bar{I}\eta)$ no longer converges exponentially, and $\mathbb{P}[\{I > 0\}\ f.e.] = 1$ is not satisfied.  
    However, the convergence of $I$ to $\{I\}$ can still hold, which will be established in the following.
    For any $\epsilon > 0$ denote 
    \begin{equation}
       \theta_{\epsilon} = \frac{\log D}{\sqrt{D^{1-\epsilon}}} \rightarrow 0.
    \end{equation}
    Now, we denote  
    \begin{equation}
        \eta_0 \equiv |\bar{I} - m_I|/\theta_{\epsilon} \leq \frac{\sqrt{2 \pi}}{D^{\epsilon/2}} \rightarrow 0.
    \end{equation}
    This means that 
    \begin{align} \label{eq: concentration_I_theta}
        & \mathbb{P}[|I - \bar{I}| \geq \theta_{\epsilon}\eta]
        \leq \mathbb{P}[|I - m_I| \geq \theta_{\epsilon}(\eta - \eta_0)] \nonumber\\
        & \leq 2 \exp\left[-\frac{(\eta - \eta_0)^2}{2}D^{\epsilon}\right] \rightarrow 0.
    \end{align}
    Again, since $\frac{(\eta - \eta_0)^2}{2} \rightarrow \frac{\eta^2}{2} > 0$, for any $\delta>0$, there is a sufficiently large $N_0$, such that $0 < C_{\delta} = \frac{\eta^2}{2}(1- \delta)  < \frac{(\eta - \eta_0)^2}{2}$.
    \begin{align}
        & \sum_{N=N_0}^{\infty}\mathbb{P}_N(|I - \bar{I}| \geq \theta_{\epsilon}\eta)  \leq \sum_{N=N_0}^{\infty} 2 \exp\left[-C_{\delta} D^{\epsilon}\right] \nonumber \\
        & \leq 2 \int_0^{\infty} \mathrm{d}x e^{-C_{\delta} 2^{\epsilon x}} = \frac{2}{\epsilon \log 2} \int_{C_{\delta}}^{\infty} \mathrm{d}(\log z) e^{-z} \nonumber \\
        & = \frac{2}{\epsilon \log 2} \left( \int_{C_{\delta}}^{\infty} \log z e^{-z} \mathrm{d}z - \log C_{\delta} e^{-C_{\delta}} \right)\nonumber \\
        & = \frac{2}{\epsilon \log 2} \left[\partial_{\alpha} \left.\left(\int_{C_{\delta}}^{\infty} e^{(\alpha-1)z} \mathrm{d}z\right)\right|_{\alpha = 0} - \log C_{\delta} e^{-C_{\delta}} \right] \nonumber \\
        & = \frac{2}{\epsilon \log 2} \left[\partial_{\alpha} \left.\left(\frac{e^{-(1-\alpha)C_{\delta}}}{1-\alpha} \right)\right|_{\alpha = 0} - \log C_{\delta} e^{-C_{\delta}} \right] \nonumber \\
        & = \frac{2 (C_{\delta}- \log C_{\delta} +1) e^{-C_{\delta}}}{\epsilon \log 2}  < \infty, 
    \end{align}
    we get
    \begin{equation}
        \mathbb{P}(\{|I - \bar{I}| \leq \theta_{\epsilon}\eta\} \ f.e.) = 1.
    \end{equation}
    However, since $\theta_{\epsilon} \geq \bar{I}$ if $n_C > 3/4$, it is possible
    \begin{equation}
        \mathbb{P}[I(A:B|C) = 0] > 0.
    \end{equation}
\end{proof}
\begin{remark}
    The probability inequalities Eqs.~(\ref{eq: concentration_I_bar}) and~(\ref{eq: concentration_I_theta}) are the rate of convergence in probability.
    Given a tolerant probability of failure $\delta$, it can be derived from these inequalities that the minimal size $N_{\delta,\eta}$ of system such that $I$ converges to it mean $\bar{I}$ with given accuracy $\eta$, i.e. $|I - \bar{I}|\leq \eta$.
\end{remark}
\section{Proofs for Weakly Non-Zero Markov Gap} \label{app: concentration}

\begin{proof}[Proof of Theorem~\ref{theorem: Markov_gap}]
    
    Now, assume the subsystem $C$ is the maximal subsystem, $n_{\max} = n_C$, and $n_B \geq n_A$.
    If $n_C \leq \frac{1}{2}$, then
    \begin{align}
        \bar{h}(A:B) = (N - 2N_B)\log2 \rightarrow \infty, \\
        \bar{h}(A:C) = (N - 2N_C)\log2 \rightarrow \infty, \\
        \bar{h}(B:C) = (N - 2N_C)\log2 \rightarrow \infty,
    \end{align}
    in the thermodynamic limit $N \rightarrow \infty$.
    There is no difficult to prove the theorem with Levy's Lemma.

    Then, we focus on the region $n_C \geq \frac{1}{2}$.
    For simplicity, we now denote $h(A:B)$ as $h$.
    Since Markov gap $h$ is a $2(2\log D_A + \log D)$-Lipschitz function, by Levy's Lemma,
    \begin{equation}
        \mathbb{P}(|h - m_h|>\epsilon) \leq 2 \exp\left[-\frac{D \epsilon^2}{2 (1+2n_A)^2N^2\log^2 2}\right].
    \end{equation}
    By Proposition~1.9 in~\cite{ledoux2001concentration}, we have 
    \begin{equation}
        |\bar{h} - m_h| \leq (1+2n_A)N\log 2\sqrt{\frac{2\pi}{D}} \rightarrow 0.
    \end{equation}

    In Haar ensemble, for $n_C >1/2$, the Markov gap $\bar{h}(A:B) = \frac{D_A D_B}{4 D_C} [(1-2n_B)N\log2 + O(1)]$, thus if $n_C \leq \frac{3}{4} - \frac{\epsilon\log N}{2N\log2}$
    \begin{align}
        \eta_0 & \equiv |1 - m_h/\bar{h}| \leq \frac{4(1+2n_A)}{1-2n_B + O(N^{-1})}\sqrt{\frac{2\pi D_C}{D_A^3D_B^3}} \nonumber \\
        & \leq \frac{4(1+2n_A)}{1-2n_B} \sqrt{2\pi} N^{-\epsilon} (1 + O(N^{-1}))\rightarrow 0
    \end{align}
    in the thermodynamic limit $N \rightarrow \infty$.
    This means that 
    \begin{equation}
        m_h \geq \bar{h}(1-\eta_0) \sim (1-\eta_0)\frac{D_A D_B}{4 D_C}(1-2n_B)N\log2.
    \end{equation}
    Since $\eta_0 \rightarrow 0$, for any given $\eta <1$, there is $N_0$ such that for any $N \geq N_0$, $\eta_0<\eta$. 
    \begin{align} \label{eq: concentration_h_bar}
        & \mathbb{P}[|h - \bar{h}| \geq \bar{h}\eta] 
        \leq \mathbb{P}[|h - m_h| \geq \bar{h}(\eta - \eta_0)] \nonumber  \\
        & \leq 2 \exp\left[- \frac{(\eta - \eta_0)^2 (1-2n_B)^2D_A^3 D_B^3 }{32(1+2n_A)^2 D_C}(1 + O(N^{-1}))\right] \nonumber \\
        & \leq 2 \exp\left[- \frac{(\eta - \eta_0)^2 (1-2n_B)^2 N^{2(1+\epsilon)}}{32(1+2n_A)^2}(1 + O(N^{-1}))\right] \nonumber \\
        & \rightarrow 0. 
    \end{align}
    By Lemma~\ref{lemma: f.e.}, similar to proof of Theorem~\ref{theorem: mutual_information}, for arbitrary $\delta > 0$, there is sufficiently large $N_0$, such that $C_{\delta} = \frac{\eta ^2 (1-2n_B)^2 }{32(1+2n_A)^2} (1 -\delta) < \frac{(\eta - \eta_0)^2 (1-2n_B)^2}{32(1+2n_A)^2}(1 + O(N^{-1}))$ for all $N > n_0$.
    We have
    \begin{align}
        &\sum_{N=N_0}^{\infty}\mathbb{P}(|h - \bar{h}| \geq \bar{h}\eta) \leq \sum_{N=N_0}^{\infty} 2 \exp\left[- C_{\delta} N^{2\epsilon}\right] \nonumber \\
        & \leq \sum_{N=1}^{\infty} 2 \exp\left[- C_{\delta} N^{2\epsilon}\right]
        \leq \frac{\Gamma\left(\frac{1}{2 \epsilon}\right)}{\epsilon C_{\delta}} < \infty.
    \end{align} 
    It follows that for every $\eta<1$, $|h - \bar{h}| \leq \bar{h}\eta$ almost surely with only finite exceptions
    \begin{equation} \label{eq: eta_h}
        \mathbb{P}(\{|h - \bar{h}| \leq \bar{h}\eta\}\ f.e.) = 1.
    \end{equation}
    Since $\{|h - \bar{h}| \leq \bar{h}\eta\} \subset \{h \geq \bar{h}(1-\eta) > 0 \}$, it follows $h$ is almost surely larger than zero with only finite exceptions 
    \begin{equation}
        \mathbb{P}(\{h > 0\}\ f.e.) = 1.
    \end{equation}
    Moreover, $h \rightarrow \bar{h}$ if $|h-\bar{h}| < \epsilon$ for any $\epsilon > 0$ with only finite exceptions, Eq.~(\ref{eq: eta_h}) implies that
    \begin{equation}
        \mathbb{P}(h \rightarrow \bar{h}) = 1.
    \end{equation}

    If $n_C \geq 3/4$, $|\bar{h} - m_h|/\bar{h} \rightarrow \infty$, thus for any $\eta < 1$, $\mathbb{P}(|h - \bar{h}| \geq \bar{h}\eta)$ no longer converges exponentially, and $\mathbb{P}[\{h > 0\}\ f.e.] = 1$ is not satisfied.  
    However, the convergence of $I$ to $\{I\}$ can still hold, which will be established in the following.
    For any $\epsilon > 0$ denote 
    \begin{equation}
       \theta_{\epsilon} = \frac{\log D}{\sqrt{D^{1-\epsilon}}} \rightarrow 0,
    \end{equation}
    then 
    \begin{equation}
        \eta_0 \equiv |\bar{h} - m_h|/\theta_{\epsilon} \leq \frac{(1+2n_A)\sqrt{2 \pi}}{D^{\epsilon/2}} \rightarrow 0.
    \end{equation}
    This means that 
    \begin{align} \label{eq: concentration_h_theta}
        & \mathbb{P}[|h - \bar{h}| \geq \theta_{\epsilon}\eta]
        \leq \mathbb{P}[|h - m_h| \geq \theta_{\epsilon}(\eta - \eta_0)] \nonumber\\
        & \leq 2 \exp\left[-\frac{(\eta - \eta_0)^2}{2 (1+2n_A)^2} D^{\epsilon}\right] \rightarrow 0.
    \end{align}
    Again, for arbitrary $\delta > 0$, there is sufficiently large $N_0$, such that $C_{\delta} = \frac{\eta^2 (1 -\delta)}{2 (1+2n_A)^2}  < \frac{(\eta - \eta_0)^2}{2 (1+2n_A)^2}$ for all $N > n_0$.
    We have 
    \begin{align}
        & \sum_{N=N_0}^{\infty}\mathbb{P}_N(|h - \bar{h}| \geq \theta_{\epsilon}\eta)  \leq \sum_{N=N_0}^{\infty} 2 \exp\left[-C_{\delta} D^{\epsilon}\right] \nonumber \\
        & \leq  \frac{2 (C_{\delta}- \log C_{\delta} +1) e^{-C_{\delta}}}{\epsilon \log 2}  < \infty, 
    \end{align}
    By Lemma~\ref{lemma: f.e.}, it follows that $h$ almost surely converges to $\bar{h}$ 
    \begin{equation}
        \mathbb{P}(\{|h - \bar{h}| \leq \theta_{\epsilon}\eta\}\ f.e.) = 1.
    \end{equation} 

    For $\bar{h}(A:C) = \bar{h}(B:C) = \frac{D_A D_B}{2 D_C} [1+ O(D_A^{-2})]$, the proof also has two parts.
    One is $\mathbb{P}[\{h > 0\}\ f.e.] = 1$ for $n_C < 3/4$, another is $\mathbb{P}(h \rightarrow \bar{h}) = 1$.
    Now, we denote $h = \bar{h}(A:C) = \bar{h}(B:C)$.
    For the first, one can follow similar methods to prove it, or notice that $\{h(A:C) = 0\} = \{h(B:C) = 0\} = \{h(A:B) = 0\}$ from the structure theorem of Markov gap $h$, 
    \begin{equation}
        \mathbb{P}[\{h > 0\}\ f.e.] = 1,
    \end{equation} 
    for $h = \bar{h}(A:C) = \bar{h}(B:C)$ follows from $\mathbb{P}[\{h(A:B) > 0\}\ f.e.] = 1$ proved in above.
    For the second one, notice $h(A:C)$ and $h(B:C)$ are $2(2\log D_A + \log D)$-Lipschitz function and $2(2\log D_B + \log D)$-Lipschitz function, respectively
    (recall that we assume $n_A \leq n_B \leq n_C$ at the outset).
    We still denote 
    \begin{equation}
       \theta_{\epsilon} = \frac{\log D}{\sqrt{D^{1-\epsilon}}} \rightarrow 0,
    \end{equation}
    for any $\epsilon > 0$, then it has
    \begin{equation}
        \eta_0 \equiv |\bar{h} - m_h|/\theta_{\epsilon} \leq \frac{(1+2n_X)\sqrt{2 \pi}}{D^{\epsilon/2}} \rightarrow 0,
    \end{equation}
    where $n_X = n_A$ or $n_B$ for $h(A:C)$ or $h(B:C)$, accordingly.
    This similar means that 
    \begin{align}
        & \mathbb{P}[|h - \bar{h}| \geq \theta_{\epsilon}\eta]
        \leq \mathbb{P}[|h - m_h| \geq \theta_{\epsilon}(\eta - \eta_0)] \nonumber\\
        & \leq 2 \exp\left[-\frac{(\eta - \eta_0)^2}{2 (1+2n_X)^2} D^{\epsilon}\right] \rightarrow 0.
    \end{align}
    Similarly, for any $\delta > 0$, there are $N_0$ such that $C_{\delta} = \frac{\eta^2(1- \delta)}{2 (1+2n_X)^2} \leq \frac{(\eta - \eta_0)^2}{2 (1+2n_X)^2}$ for any $N \geq N_0$, $\eta_0 < \eta$.
    \begin{align}
        & \sum_{N=N_0}^{\infty}\mathbb{P}_N(|h - \bar{h}| \geq \theta_{\epsilon}\eta)  \leq \sum_{N=N_0}^{\infty} 2 \exp\left[-C_{\delta} D^{\epsilon}\right] \nonumber \\
        & \leq  \frac{2 (C_{\delta}- \log C_{\delta} +1) e^{-C_{\delta}}}{\epsilon \log 2}  < \infty, 
    \end{align}
    and it follows that
    \begin{equation}
        \mathbb{P}(\{|h - \bar{h}| \leq \theta_{\epsilon}\eta\}\ f.e.) = 1.
    \end{equation} 
\end{proof}
\begin{remark}
    Similar to the remark on the proof of Theorem~\ref{theorem: mutual_information},  the minimal size $N_{\delta,\epsilon}$ of system such that $I$ converges to mean $\bar{I}$ with accuracy $\epsilon$, i.e. $|I - \bar{I}|\leq \epsilon$, and a tolerant probability of failure $\delta$ can be derived from the probability inequalities Eqs.~(\ref{eq: concentration_h_bar}) and~(\ref{eq: concentration_h_theta}).
\end{remark}

\begin{corollary} \label{corollary: undistillable}
    With respect to Haar measure, the state $\ket{\psi_{ABC}}$ with weakly nonzero Markov gap almost surely has an entanglement undistillable state, i.e. bound entangled $\mathrm{BND}_{\mathrm{marg}}$ or separable $\mathrm{SEP}_{\mathrm{marg}}$, as one of its marginal states with only finite exceptions, and the state $\ket{\psi}$ with strongly nonzero Markov gap almost surely has NPT states $\mathrm{NPT}_{\mathrm{marg}}$ as all of its marginal states with only finite exceptions in the thermodynamic limit
    \begin{align}
        & \mathbb{P}[\mathrm{BND}_{\mathrm{marg}} \cup \mathrm{SEP}_{\mathrm{marg}} \vert \{h\rightarrow0^{+}\} \ f.e.] = 1,\\
        & \mathbb{P}[\mathrm{NPT}_{\mathrm{marg}}  \vert \{h\gg 0\} \ f.e.] = 1.
    \end{align}
\end{corollary}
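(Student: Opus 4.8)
The plan is to reduce Corollary~\ref{corollary: undistillable} to a dictionary between the Markov-gap behavior established in Theorem~\ref{theorem: Markov_gap} and the known entanglement phases of the marginals of $\ket{\psi_{ABC}}$. By that theorem the three Markov gaps converge almost surely to their means, and those means (listed just before Theorem~\ref{theorem: Markov_gap}) are $\Theta(N)$ when $n_{\max}<1/2$ but exponentially small, of order $D^{1-2n_{\max}}$, when $n_{\max}>1/2$; moreover for $1/2<n_{\max}<3/4-o(1)$ the gaps are strictly positive with only finite exceptions. Hence the event ``$\ket{\psi_{ABC}}$ has strongly non-zero Markov gap'' is exactly the regime $n_{\max}<1/2$, and ``weakly non-zero Markov gap'' is exactly $1/2<n_{\max}<3/4-o(1)$. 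So it suffices to prove: (i) for $1/2<n_{\max}<3/4$, fixing WLOG $n_C=n_{\max}$, the marginal $\rho_{AB}$ is PPT almost surely with only finite exceptions; and (ii) for $n_{\max}<1/2$, each of $\rho_{AB},\rho_{BC},\rho_{AC}$ is NPT almost surely with only finite exceptions.

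For step (i): since $n_C>1/2=n_{\mathrm{PPT}}$ one has $D_C\gg s_{\mathrm{PPT}}=4D_AD_B$, so the PPT threshold of~\cite{PhysRevA.85.030302,PhysRevA.85.062331,aubrun2014entanglement} already gives $\rho_{AB}\in\mathrm{PPT}$ in probability. I would upgrade this via Lemma~\ref{lemma: f.e.}: $\rho_{AB}$ fails to be PPT precisely when the smallest eigenvalue of $\rho_{AB}^{T_B}$ is negative, and the Aubrun--Szarek--Ye analysis shows this spectrum concentrates on a semicircle centered at $1/D_{AB}$ with radius $\Theta(1/\sqrt{D_{AB}D_C})=o(1/D_{AB})$, the probability of a macroscopic edge deviation being exponentially small in a power of $D$; therefore $\sum_N\mathbb{P}(\rho_{AB}\notin\mathrm{PPT})<\infty$. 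A PPT state cannot be distilled~\cite{PhysRevLett.80.5239} and is either separable or bound entangled, so $\rho_{AB}\in\mathrm{SEP}_{\mathrm{marg}}\cup\mathrm{BND}_{\mathrm{marg}}$, which is the first displayed identity. (The separability threshold $n_{\mathrm{SEP}}\le 3/5$ refines this into a genuinely bound-entangled marginal for $1/2<n_C<n_{\mathrm{SEP}}$ and a separable one for $n_{\mathrm{SEP}}<n_C<3/4$, but the union absorbs both cases.)

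For step (ii): since $n_{\max}<1/2$, for each marginal the traced-out subsystem $Z$ satisfies $D_Z\ll 4D_XD_Y=s_{\mathrm{PPT}}$, so each marginal is NPT in probability. This time the ``finite exceptions'' upgrade is the easy one: the logarithmic negativity $E_N=\log\|\rho^{T_B}\|_1$ is $O(\sqrt{D_AD_B})$-Lipschitz on the unit sphere — using $|\log a-\log b|\le|a-b|$ for $a,b\ge1$, $\|M^{T_B}\|_1\le\sqrt{D_{AB}}\,\|M\|_2$, and that the partial trace over $C$ is a trace-norm contraction — while $\bar E_N(A{:}B)=\Theta(N)$ in the ES phase. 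Levy's Lemma~\cite{ledoux2001concentration} then yields $\mathbb{P}(E_N\le\tfrac12\bar E_N)\le 2\exp(-\Omega(D_C\bar E_N^2))$, which is summable, so $E_N>0$ (equivalently NPT) holds almost surely with only finite exceptions by Lemma~\ref{lemma: f.e.}; intersecting the three such events over the three marginals gives the second identity.

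The main obstacle I anticipate is the ``finite exceptions'' strengthening in step (i): unlike the entropies, the conditional mutual information $I(A{:}B|C)$ of Theorem~\ref{theorem: mutual_information}, the Markov gap $h$, or $E_N$ in the ES phase — all of which have a typical value separated from the critical value by a large margin — the quantity controlling PPT, the least eigenvalue of $\rho_{AB}^{T_B}$, sits only an $o(1/D_{AB})$ margin above zero, so a bare Levy estimate does not suffice and one must lean on the exponential large-deviation (edge-stability) bounds already contained in the Aubrun--Szarek--Ye proof of the PPT threshold. Everything else is bookkeeping matching the Markov-gap regimes of Theorem~\ref{theorem: Markov_gap} to the PPT/ME/ES labels of the marginals.
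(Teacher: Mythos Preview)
Your proposal is correct and follows essentially the same approach as the paper: both arguments match the Markov-gap regimes of Theorem~\ref{theorem: Markov_gap} to the PPT/SEP/NPT thresholds of the marginals, then invoke Lemma~\ref{lemma: f.e.} to pass from convergence in probability to ``almost surely with finite exceptions.'' The paper is terser on the latter step---it simply asserts the threshold probabilities obey $|\mathbb{P}_N-\mathbb{P}_\infty|\le c\exp[-C\,2^{\alpha N}]$---whereas you supply concrete mechanisms (edge-stability of the semicircle for the PPT side, a Lipschitz/Levy bound on $E_N$ for the NPT side), which is a welcome elaboration rather than a different strategy.
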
\noindent

\begin{proof}
    Let $A$ denote some events on the states in Haar ensemble.
    Then, in the region $n_{\max}<3/4$, the probability of event $A$ is 
    \begin{align}
        \mathbb{P}(A) & = \mathbb{P}(A \cap \{h \not\rightarrow \bar{h}\} \ i.o.)  \nonumber\\
        &~~~~ + \mathbb{P}(A \cap \{h \rightarrow \bar{h}\} \ f.e.) \nonumber\\
        & = \mathbb{P}(A \cap \{h_N > 0\} \ f.e.)
    \end{align}
    with Theorem~\ref{theorem: Markov_gap}.
    For tripartite Haar random state, we denote $C$ as the largest subsystem, $n_C = n_{\max}$, and $A$ as the smallest subsystem, $n_A = n_{\min}$. 
    In particular, since both the thresholds for separability and PPT 
    \begin{equation}
        n_{\mathrm{PPT}} =\frac{1}{2} \leq n_{\mathrm{SEP}} = \frac{1+n_{\min}}{2} <3/4,
    \end{equation}
    the Haar random state has the threshold behaviors for separability and PPT of marginal states
    \begin{equation}
        \left\{\begin{array}{lc}
            \mathbb{P}(\mathrm{NPT}_{\mathrm{marg}}) \rightarrow 1, & n_{C}< n_{\mathrm{PPT}} \\
            \mathbb{P}(\mathrm{BND}_{\mathrm{marg}}) \rightarrow 1, & n_{\mathrm{PPT}} < n_{C} <n_{\mathrm{SEP}}  \\
            \mathbb{P}(\mathrm{SEP}_{\mathrm{marg}}) \rightarrow 1, & n_{C} >n_{\mathrm{SEP}}
        \end{array}\right. 
    \end{equation}
    where $\mathrm{NPT}_{\mathrm{marg}}$ means all the marginal states are negative partial transpose, $\mathrm{BND}_{\mathrm{marg}}$ means one of the marginal states is bound entangled, and $\mathrm{SEP}_{\mathrm{marg}}$ means one of the marginal states is separable.
    In precise, since speed of the convergence of the probabilities is  exponential,
    \begin{equation}
        \mathbb{P}(\cdot) \geq 1 - c\exp[-C 2^{\alpha N}],
    \end{equation} 
    where $\cdot$ denote some events (see following equation), it is not difficult to enhance it with Lemma~\ref{lemma: f.e.}
    \begin{equation}
        \left\{\begin{array}{lc}
            \mathbb{P}(\mathrm{NPT}_{\mathrm{marg}} \ f.e.) = 1, & n_{C}< n_{\mathrm{PPT}} \\
            \mathbb{P}(\mathrm{BND}_{\mathrm{marg}} \ f.e.) = 1, & n_{\mathrm{PPT}} < n_{C} <n_{\mathrm{SEP}}  \\
            \mathbb{P}(\mathrm{SEP}_{\mathrm{marg}} \ f.e.) = 1, & n_{C} >n_{\mathrm{SEP}}
        \end{array}\right. .
    \end{equation}
    Moreover, since 
    \begin{equation}
        \left\{\begin{array}{lc}
            \mathbb{P}(\{h_N>>0\} \ f.e.) = 1, & n_{C}< n_{\mathrm{PPT}} \\
            \mathbb{P}(\{h\rightarrow0^{+}\} \ f.e.) = 1, & n_{\mathrm{PPT}} < n_{C} <3/4  \\
            \mathbb{P}(\{h\rightarrow0\} \ f.e.) = 1, & n_{C} > 3/4
            \end{array}\right. .
    \end{equation} 
    With Eq.~(\ref{eq: event_compose}), it follows that
    \begin{align}
        & \mathbb{P}[\mathrm{BND}_{\mathrm{marg}} \cup \mathrm{SEP}_{\mathrm{marg}} \vert \{h\rightarrow0^{+}\} \ f.e.] = 1,\\
        & \mathbb{P}[\mathrm{NPT}_{\mathrm{marg}}  \vert \{h\gg 0\} \ f.e.] = 1.
    \end{align}
\end{proof}


%

\end{document}